\DeclareSymbolFont{calsym}{OMS}{ztmcm}{m}{n}
\DeclareSymbolFontAlphabet{\mathcal}{calsym}
\newcommand{\bb}{\mathbb }
\newcommand{\ird}{\int_{\bb{R}^d}}
\newcommand{\td}{\tilde }
\newcommand{\ol}{\overline }
\newcommand{\ca}{\mathcal }
\newcommand{\tn}{\, \textnormal}
\newcommand{\beq}{\begin{equation}\label }
\newcommand{\eeq}{\end{equation} }
\newcommand{\bal}{\begin{align*} }
\newcommand{\ve}{\varepsilon}
\newcommand{\bs}{\boldsymbol}
\newcommand{\eal}{\end{align*} }
\newtheorem{ass}{Assumption}[]
\newtheorem{satz}{Theorem}[section]
\newtheorem{lemma}[satz]{Lemma}
\newenvironment{myproof}[1][\proofname]{\proof[#1]\mbox{}\\*}{\endproof}
\renewcommand{\proofname}{Proof}
\renewcommand\proof[1][\proofname]{%
\par
\pushQED{\qed}%
\normalfont \topsep 6pt plus 6pt\relax
\trivlist
\item[\hskip\labelsep
      #1:]\ignorespaces}
\renewcommand{\arraystretch}{1.1}
\begin{document}
\setlength\parindent{0pt}
\parindent 0cm

\title{Multiscale inference for multivariate deconvolution}

\author{
{\small Konstantin Eckle, Nicolai Bissantz, Holger Dette } \\
{\small Ruhr-Universit\"at Bochum } \\
{\small Fakult\"at f\"ur Mathematik } \\
{\small 44780 Bochum, Germany } 
}
\date{}

\maketitle
\begin{abstract}

In this paper we provide new methodology for  inference of  
the geometric features of a multivariate density
in deconvolution.
Our approach is based on multiscale tests to detect signi\-ficant  directional derivatives of the 
unknown density at arbitrary points in arbitrary directions. 
The multiscale method is used to  identify regions of monotonicity and to construct a general
procedure for the  detection  of modes of the multivariate density. Moreover,
as an important application a significance test for the presence 
of a local maximum at a pre-specified  point is proposed. 
 The performance of the new  methods is investigated from a 
theoretical point of view and  the finite sample properties are illustrated by means of a small simulation study.

\end{abstract}

Keywords and Phrases: deconvolution, modes, multivariate density, multiple tests, Gaussian  approximation
\\
AMS Subject Classification: 62G07, 62G10, 62G20


\section{Introduction}
\def\theequation{1.\arabic{equation}}
\setcounter{equation}{0}
In many  applications such as  in biological, medical imaging or
signal detection only indirect observations are available for statistical inference, and these  
 problems are called inverse problems  in the (statistical) literature.
  In the case of medical imaging, a well-known example is Positron Emission Tomography. Here, the connection
between the 'true' image and the observations involves the Radon transform [see,  for example, \cite{MR1790012}]. Other typical examples
are the  reconstruction of biological or astronomical images, where the connection between
the true image and the observable image is - at least in a first approximation - given by convolution-type operators
[see,  for example, \cite{0266-5611-11-4-003} or \cite{MR2565572}]. Whereas in 
these models the data is in general described in a regression framework, similar (de-)convolution
problems arise in density estimation from indirect observations [see \cite{MR1224414} for an early reference]. The corresponding (multivariate)
statistical model for density deconvolution 
is defined by 
\beq{g10} Y_i=Z_i+\ve_i,\quad i=1,\hdots,n,
\eeq
where  $(Z_1,\ve_1)  , \ldots , (Z_n,\ve_n) \in\bb{R}^d \times \bb{R}^d $ are independent identically distributed random variables and  
the noise terms $\ve_1 , \ldots , \ve_n$ are are  also independent  the of the random variables $Z_1, \ldots , Z_n$. We assume that the 
density  $f_\ve$  of the errors  $\ve_i$ is known  and are interested in properties of the density $f$
of the random variables  $Z_i$ based on the sample $ \{Y_1 ,\ldots , Y_n \} $. In terms of densities, 
model \eqref{g10} can be rewritten as
\[
 g=f*f_\ve,
\]
where $g$ denotes the density of $Y_1$.  Density
  estimators can be  constructed and investigated  similarly to the regression case (see the references in the next paragraph), 
  and in this paper we are interested in describing qualitative features of the density $f$ using the sample $\{Y_1,\hdots,Y_n\}$. In particular we will
 develop a method for simultaneous detection of regions of monotonicity of the density
$f$ at a controlled level and construct a procedure  for the detection of the modes of $f$.
To our best knowledge multivariate  problems of this type have not been investigated so far in the literature.

On the other hand there exists a wide range of literature concerning statistical inference in the 
univariate deconvolution model. A Fourier-based  estimate of the density $f$  using a damping factor for large frequencies 
 was introduced in
\cite{MR1224414}, whereas  \cite{MR1765627} estimate $f$ with a wavelet-based deconvolution density estimator
[see also  \cite{MR1700237} for  a nonparametric estimator for the corresponding distribution function 
or \cite{MR2132729} for  a plug-in  estimator of $f$  based on estimation of 
a scale parameter for the noise level].   \cite{RSSB:RSSB599} develop confidence bands for deconvolution kernel
density estimators, while 
 minimax rates for this estimation problem can be found in  \cite{MR997599}  and  \cite{MR1126324}.   \cite{romano1988} and \cite{grundhall1995}
 point out that the detection of regions of monotonicity and of the modes of a density
is a more complex problem and  
 \cite{MR1126324} shows that the  minimax rate for estimating the derivative over a H\"older-$\beta$-class ($\beta\geq2$)
in the univariate setting $d=1$ is given by $n^{-(\beta-1)/(2\beta+2r+1)}$, where $r>0$ denotes the order of polynomial decay
of the Fourier transform of the error density $f_\ve$. 
 \cite{MR2757197} develop  a test for the number of modes of a univariate density and 
\cite{MR2493014} proposes  a local test for monotonicity for a fixed interval. More recently 
\cite{MR3113812} discuss  multiscale tests for qualitative features of a univariate density which provide uniform
confidence statements about shape constraints such as local monotonicity properties.
Little research has been done regarding multivariate deconvolution problems. Recent references for density estimation are e.g. 
\cite{MR3088382} using kernel density estimators and \cite{bay} for a Bayesian approach in the case of an unknown error 
distribution with  replicated proxies available. Hypothesis testing in deconvolution is investigated in \cite{MR2292917}
and \cite{MR2421946}.  
 
In the present paper we will develop  a multiscale method for  simultaneous identification of regions of 
monotonicity of the multivariate density $f$ in the  deconvolution model \eqref{g10}. 
Our approach is based on simultaneous  local tests of the directional derivatives
of the density $f$ for a significant deviation from zero for ``various'' directions and locations. 
In Section \ref{Sec3} we present a  Fourier based method  for the  construction
of local tests, which will be used for the inference about the monotonicity properties of the density $f$. 
Roughly speaking, we propose a multiscale test investigating the sign of the derivatives of the density $f$ in different  locations 
and directions and on different scales.
Section \ref{sec3a}  is devoted to asymptotic properties, which can be used  to obtain 
 a multiscale test for simultaneous confidence statements about the density. 
 Moreover, we also   propose a method
for the detection and localization of the modes. The finite sample properties of the method are discussed
in Section \ref{Sec4} and all
proofs  are deferred to Sections \ref{Sec5} and \ref{techres}, while Section \ref{sec7} contains two technical results.

\section{Multiscale inference in multivariate deconvolution}\label{Sec3}
\def\theequation{2.\arabic{equation}}
\setcounter{equation}{0}

Let $\partial_s$ denote the directional derivative in the direction of $s\in S^{d-1}=\{s\in\bb{R}^d\,| \,\|s\|=1\}$
and $\phi:\bb{R}^d\rightarrow\bb{R}_{\geq0}$ be a sufficiently smooth kernel (i.e. $\|\phi\|_{L^1(\bb{R}^d)}=1$) with compact support in $[-1,1]^d$. Define
\[\phi_{t,h}(.)=h^{-d}\phi\big(\tfrac{.-t}{h}\big)\quad\tn{for }t\in[0,1]^d,h>0.\]
For the description of the  local monotonicity properties of the function $f$  we introduce the integral
\beq{a5}-\int_{\bb{R}^d} \partial_sf(x)\phi_{t,h}(x)\tn{d}x. \eeq
If this expression is, say, negative, we can conclude that the
 derivative of $f$ in direction $s$ has to be strictly larger than zero on a subset of positive Lebesgue measure of the cube
$[t_1-h,t_1+h]\times\hdots\times[t_d-h,t_d+h]$. 

Statistical inference regarding the monotonicity properties 
of $f$ can then be performed by testing simultaneously several hypotheses of the form 
\begin{align}
\label{t5neu}H_{0,incr}^{s^j,t^j,h_j}: -\int_{\bb{R}^d} \partial_{s^j}f(x)\phi_{t^j,h_j}(x)\tn{d}x\geq0 ~~   
&\mbox{versus}  ~ ~H_{1,incr}^{s^j,t^j,h_j}:-\int_{\bb{R}^d} \partial_{s^j}f(x)\phi_{t^j,h_j}(x)\tn{d}x<0~
\end{align}
and
\begin{align}
\label{t5neu1}H_{0,decr}^{s^j,t^j,h_j}: -\int_{\bb{R}^d} \partial_{s^j}f(x)\phi_{t^j,h_j}(x)\tn{d}x\leq0 ~~   
&\mbox{versus}  ~  ~H_{1,decr}^{s^j,t^j,h_j}:-\int_{\bb{R}^d} \partial_{s^j}f(x)\phi_{t^j,h_j}(x)\tn{d}x>0~,
\end{align}
where $ (s^1,t^1,h_1), \ldots ,  (s^p,t^p,h_p) $ are given triples of directions, locations and scaling factors.

This method allows for a global understanding of the shape of the density $f$. A particular feature of the proposed 
method consists in the fact that by conducting
 formal statistical tests
the multiple level can be controlled (see Theorem \ref{t10}).

For example, simultaneous tests for 
hypotheses of the form \eqref{t5neu} and \eqref{t5neu1} can be used  to obtain a graphical representation 
of the local monotonicity behavior of the density as displayed in Figure \ref{g25}
for a bivariate density. 
The displayed  map is based on tests for the  hypotheses \eqref{t5neu}  
 for a fixed scale  $h_0$ and different locations and directions  $ (s^1,t^1), \ldots ,  (s^p,t^p) $ (here taken as  the vertices of an equidistant grid
 and  four equidistant directions on $S^1$). Note that we are investigating here a symmetric set of triples,
that is, for every location $t^j$  both the triple $(s^j,t^j,h_0)$ and $(-s^j,t^j,h_0)$ are considered.
Thus, as $H_{0,incr}^{s^j,t^j,h_0}=H_{0,decr}^{-s^j,t^j,h_0}$,
it is sufficient to investigate only hypotheses of the form \eqref{t5neu} in this setting. The figure shows the results of the tests for the different hypotheses in \eqref{t5neu}.
 An arrow in a direction  $s^j$ at a location $t^j$ represents a rejection of the corresponding hypothesis $H_{0,incr}^{s^j,t^j,h_0}$
 and provides therefore an indication of a positive directional derivative 
 of $f$ in direction $s^j$ at the  location $t^j$.
\begin{figure}
\begin{center}
\includegraphics[width=0.4\textwidth]{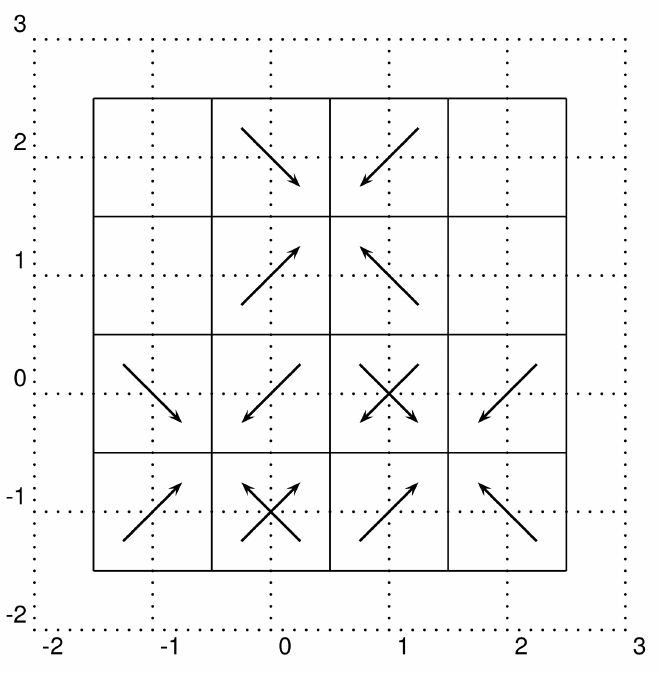}
\end{center}
\caption{\it Example of a global map for monotonicity of a bivariate density.} 
\label{g25}\end{figure}
For a detailed description of the settings used to provide Figure \ref{g25} and an analysis of the results we refer to Section \ref{g27}. 

If one is interested in specific shape constraints of the density, say in a test for a mode (local maximum)
at a given point $x^0$, inference 
can be conducted investigating the hypotheses 
\begin{align}
\label{t5einmode}
H_{0,decr}^{s^j,t^j,h_0}~~~  &\mbox{versus}  ~ ~ ~H_{1,decr}^{s^j,t^j,h_0}
\end{align}
for different pairs $ (t^1,s^1), \ldots ,  (t^p,s^p) $, where $t^1, \ldots ,  t^p$ are points in a neighborhood of $x^0$ 
on the lines $\{ x^0 + \lambda s^j | \lambda > 0\} $ ($j=1,\ldots ,p$), respectively (of course, on could additionally use different scales here). \\

Throughout this paper   we will  assume that all  partial derivatives $\partial_sf$ of the density $f$
are  uniformly bounded, such that the estimated quantity \eqref{a5} is bounded by a constant which  does not depend on 
the triple $(s,t,h)$. Using integration by parts, Plancherel's  identity and the convolution theorem, we get
\begin{align}\label{h0} -\int_{\bb{R}^d} \partial_sf(x)\phi_{t,h}(x)\tn{d}x &=  \int_{\bb{R}^d}f(x)\partial_s\phi_{t,h}(x)\tn{d}x\\ \nonumber
&=  \frac{1}{(2\pi)^d}\int_{\bb{R}^d}\ca{F}(f)(y)\ol{\ca{F}(\partial_s\phi_{t,h})}(y)\tn{d}y\\ \nonumber
&=  \frac{1}{(2\pi)^d}\int_{\bb{R}^d}\ca{F}(g)(y)\ol{\bigg(\frac{\ca{F}(\partial_s\phi_{t,h})}{\ol{\ca{F}(f_\ve)}}\bigg)}(y)\tn{d}y\\ \nonumber
&=  \int_{\bb{R}^d}g(x)\ca{F}^{-1}{\bigg(\frac{\ca{F}(\partial_s\phi_{t,h})}{\ol{\ca{F}(f_\ve)}}\bigg)}(x)\tn{d}x.
\end{align}
Here, 
\begin{eqnarray*} 
\ca{F}(f)(y) &=&\int_{\bb{R}^d}e^{-iy.x}f(x)\tn{d}x, \\
\ca{F}^{-1}(f)(x)&=& \frac{1}{(2\pi)^d}\int_{\bb{R}^d}e^{ix.y}f(y)\tn{d}y\quad\big(x,y\in\bb{R}^d\big) 
\end{eqnarray*} 
 denote the Fourier transform and its inverse,
respectively, $\overline{z}$ is the complex conjugate of $z\in\mathbb{C}$ and $x.y$ stands for
the standard inner product of  $x,y\in\bb{R}^d$.

For the  construction of tests for the hypotheses in \eqref{t5neu}  and \eqref{t5neu1}    
 we define the statistic
\begin{equation}  \label{tsth}
T_{s,t,h}^n=\frac{1}{n}\sum_{i=1}^nF_{s,t,h}(Y_i),
\end{equation}
where
\begin{equation}  \label{fsth}
F_{s,t,h}(Y_i) = \ca{F}^{-1}  \Big(\frac{\ca{F}(\partial_s\phi_{t,h})}{\ol{\ca{F}(f_\ve)}}\Big)(Y_i).
\end{equation}
Because (by \eqref{h0})
\[\bb{E}(T_{s,t,h}^n)=-\int_{\bb{R}^d} \partial_sf(x)\phi_{t,h}(x)\tn{d}x,\]
it follows that $T_{s,t,h}^n$ is a reasonable estimate of the quantity defined in \eqref{a5}, and hence the statistics 
$T_{s,t,h}^n$ define the main tool to study qualitative features of the density $f$. Inference 
on  local monotonicity  of  the density  $f$  will then be  based on tests rejecting the hypotheses $H_{0,incr}^{s,t,h}$
for  small  values of the corresponding statistic $T_{s,t,h}^n$ and rejecting $H_{0,decr}^{s,t,h}$
for  large  values of $T_{s,t,h}^n$ for several 
directions $s\in S^{d-1}$, locations $t\in [0,1]^d$ and  scales $h>0$. The multiple level of these tests can be controlled 
by investigating the (asymptotic) maximum of appropriately normalized  statistics $T_{s,t,h}^n$ calculated over a certain set of
locations, directions and scales.

\section{Asymptotic properties}  \label{sec3a}
\def\theequation{3.\arabic{equation}}
\setcounter{equation}{0}

In this section we investigate the asymptotic properties of a  statistic which can be used to control the multiple level of the  tests introduced
in Section \ref{Sec3}.   To be precise, we consider the finite subset
\[
 \ca{T}_n:=\big\{(s^j,t^j,h_j)\,|\,j=1,\hdots,p\big\}\subseteq S^{d-1}\times[0,1]^d\times[h_{\tn{min}},h_{\tn{max}}]
\]
of cardinality $ p\leq n^K$ for the calculation of the maximum of appropriately standardized statistics
$T_{s,t,h}^n$, where  $K>1$ and  for some $\ve>0$
\begin{equation} \label{hminmax} 
h_{\tn{min}}\gtrsim n^{-1/d+\ve}  ~ \mbox{ and  } ~h_{\tn{max}}=o((\log(n)\log\log(n))^{-1}).
\end{equation}
Throughout this paper we will make frequent use of multi-index notation, 
where  $\bs{\alpha}=(\alpha_1,\hdots,\alpha_d)\in\bb{N}_0^d$  denotes  a  multi-index (written in bold),  $|\bs{\alpha}|=\alpha_1+\hdots+\alpha_d$
its ``length'', and for a sufficiently smooth  function  $f:\bb{R}^d\rightarrow\bb{R}$ and a multi-index $\bs{\alpha}$ we denote by
 $$\partial^{\bs{\alpha}}f(x)=\frac{\partial^{|\bs{\alpha}|}}
{\partial x_1^{\alpha_1}\cdot\hdots\cdot\partial x_d^{\alpha_d}}f(x) 
$$
its  partial derivative. 

Recall the definition of $F_{s,t,h}$ in \eqref{fsth}, to simplify the notation define  for a point  $(s^j,t^j,h_j)  \in  \ca{T}_n$ 
\begin{equation} \label{Fj} 
F_j = F_{s^j, t^j, h_j}
\end{equation}
and consider  the random variables 
\beq {labelXj1} \td{X}_{j}^{(1)}=\frac{\sqrt{\log(eh_{j}^{-d}})}{\log\log(e^eh_{j}^{-d})} \Big( \frac{h_{j}^{d/2+r+1}}{\sqrt{n\hat{g}_n(t_j)}V_j} \Big |\sum_{i=1}^n
F_j(Y_i)-{n}\bb{E}(F_j(Y_1)) \Big |
-\sqrt{2\log(h_{j}^{-d})}\Big),
\eeq
where $\hat{g}_n$ is a density estimator of $g$ satisfying
\beq {dichte}
 \|g-\hat{g}_n\|_\infty=o(\log(n)^{-1})\quad\text{almost surely}
\eeq
(for example  a kernel density estimator  as considered in \cite{MR1955344}) and  
\beq  {Vj}
 V_j=h_j^{d/2+r+1}\|F_{s^j,t^j,h_j}\|_{L^2(\bb{R}^d)}.
\eeq
The quantity $V_j$ is well-defined under the assumptions presented below (see Lemma \ref{a11}  for details).

Note that   the boundary of the hypotheses
$H_{0,incr}^{s^j,t^j,h_j}$ and $H_{0,decr}^{s^j,t^j,h_j}$  in
\eqref{t5neu} and \eqref{t5neu1} is defined by   $\int_{\bb{R}^d} \partial_{s^j}f(x)\phi_{t^j,h_j}(x)\tn{d}x=0$
and in this case we have
$$
{1 \over \sqrt n} \td{X}_{j}^{(1)} = \frac{\sqrt{\log(eh_{j}^{-d}})}{\log\log(e^eh_{j}^{-d})} \Big( \frac{h_{j}^{d/2+r+1}}{\sqrt{\hat{g}_n(t_j)}V_j}  \big |T^n_{s^j,t^j,h_j} \big | -
\frac{\sqrt{2\log(h_{j}^{-d})}}{\sqrt n} \Big).
$$
Consequently,  we will investigate the asymptotic properties of 
$ \max_{1\leq j\leq p} \td{X}_{j}^{(1)}$ in the following discussion. 
For this purpose we make the following assumptions.

 \begin{ass}\label{q2}\rm{
  Assume that the density $g$  is Lipschitz continuous  and locally bounded from
  below, i.e.
  \[
   g(x)\geq c>0\text{ for all }x\in[0,1]^d.
  \]}
  \end{ass}

 \begin{ass}\label{3}\rm{
We assume a polynomial decay of the Fourier transform of the
 error density $f_\ve$, i.e. that there exist constants  $r>0$ for $d\geq 2$ resp. $r>\frac{1}{2}$ for $d=1$ and $0<C_u<C_o$ such that
 \[
  C_u\big(1+\|y\|^{2}\big)^{-r/2}\leq |\ca{F}(f_\ve)(y)|\leq C_o\big(1+\|y\|^{2}\big)^{-r/2}.
 \]
Furthermore, let
\[
 \sum_{j=1}^{\lceil (d+1)/2\rceil}(1+\|y\|^2)^{j/2}\Big|\frac{\partial^{j}}{\partial y_l^{j}}{\ca{F}(f_\ve)}(y)\Big|\leq C_o(1+\|y\|^2
 )^{-{r}/{2}}
\]
for all $l=1,\hdots,d$.}
\end{ass}
Note that as a direct consequence of Assumption \ref{q2} $g$ is bounded from above and that there exists a constant
$\delta>0$ such that $ g(x)\geq \frac{c}{2}>0\text{ for all }x\in
  [-\delta,1+\delta]^d$.
Assumption \ref{3} can be seen as a multivariate generalization of the classical assumptions on the decay of the Fourier transform
of the error density in the ordinary smooth case (see  e.g.  \cite{MR3113812}, Assumption 2). We also note that this 
assumption defines a
 mildly ill-posed situation (see \cite{MR2421946}).
 The next  assumptions refer to the kernel $\phi$ and are required for some technical arguments.
\begin{ass}\label{a8}\rm{
Let $\|\partial_s\phi\|_{L^2(\bb{R}^d)}\neq 0$ for all $s\in S^{d-1}$ and  assume that $\partial^{\bs{\beta}}\phi$ exists in $[-1,1]^d$  and is continuous for all $|\bs{\beta}|\leq \lceil r+2\rceil$, where $r$ is the constant from Assumption \ref{3}. We assume further that
for  some $\delta>0$ the inequality
\[
  \int_{\bb{R}^d}\big(1+\|y\|^2\big)^{r+({d+\delta})/{2}}\Big|\frac{\partial^{m}}{\partial y_l^{m}}{\ca{F}(\partial_{e^k}\phi)(y)}
{}\Big|^2\tn{d}y<\infty
\]
holds for all $k,l=1,\hdots,d$ and  $m=0,\hdots,\lceil (d+1)/2\rceil$, where $e^k,\;k=1,\hdots,d,$ denotes the $k$th unit vector of
$\bb{R}^d$.}
\end{ass}
As 
\[
 \Big|\frac{\partial^{m}}{\partial y_l^{m}}{\ca{F}(\partial_{s}\phi)(y)}
{}\Big|^2=\Big|\sum_{k=1}^ds_k\frac{\partial^{m}}{\partial y_l^{m}}{\ca{F}(\partial_{e^k}\phi)(y)}
{}\Big|^2\leq C \sum_{k=1}^d\Big|\frac{\partial^{m}}{\partial y_l^{m}}{\ca{F}(\partial_{e^k}\phi)(y)}
{}\Big|^2
\]
for  all $s\in S^{d-1}$ and some constant $C>0$ that only depends on $d $, Assumption \ref{a8} yields a uniform upper bound for the integral
\[
  \int_{\bb{R}^d}\big(1+\|y\|^2\big)^{r+({d+\delta})/{2}}\Big|\frac{\partial^{m}}{\partial y_l^{m}}{\ca{F}(\partial_{s}\phi)(y)}
{}\Big|^2\tn{d}y
\]
for all $s\in S^{d-1}$.  

Recall the definition of $\td{X}_{j}^{(1)}$   in \eqref{labelXj1}  and define the vector 
$\td{X}^{(1)}=(\td{X}_{1}^{(1)},\hdots,\td{X}_{p}^{(1)})^\top$. Our first main result provides a uniform approximation of the 
probabilities $\bb{P}( \td{X}^{(1)}\in A ) $ by  the probabilities  $\bb{P} (\td{X}\in A )$
 for every half-open hyperrectangle $A$, where the components of the vector $\td{X}=(\td{X}_{1},\hdots,\td{X}_{p})^\top$
 are defined by
\beq {xtilde}
\td{X}_{j}=\frac{\sqrt{\log(eh_{j}^{-d}})}{\log\log(e^eh_{j}^{-d})}\Big(h_{j}^{d/2+r+1}\frac{|\int_{\bb{R}^d}F_{j}(x)\tn{d}B_x|}
{V_j}-\sqrt{2\log(h_{j}^{-d})}\Big)
\eeq
$(j=1,\hdots,p)$, and  $(B_x)_{x\in\bb{R}^d} $ is a standard $d$-variate Brownian motion.
\begin{satz}\label{a9}
Let $\ca{A}$ denote the set $\ca{A}:=\{(-\infty,a_1]\times\hdots\times(-\infty,a_p]\,|\,a_1,\hdots,a_p\in\bb{R}\}.$
Then,
\begin{align}\begin{split}\label{a10}
\sup_{A\in\ca{A}}\big|\bb{P}\big( \td{X}^{(1)}\in A\big)-\bb{P}\big(\td{X}\in A\big)\big|=
o(1)\quad\text{for }n\rightarrow\infty.
 \end{split}\end{align}
Furthermore, the random variable $\max_{1\leq j\leq p} \td{X}_{j}$ is almost surely bounded uniformly with respect to $n$.
\end{satz}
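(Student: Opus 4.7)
My strategy splits the theorem into its two assertions and handles each with a different toolkit. For \eqref{a10} I would use a high-dimensional Gaussian approximation on hyperrectangles followed by a Gaussian-to-Gaussian comparison; for the boundedness of $\max_j\td X_j$ I would use Gaussian-process extreme-value theory, combining chaining / Dudley entropy bounds with Borell-type concentration.

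\medskip

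The event $\{\td X^{(1)}_j\leq a_j\}$ unwinds to a two-sided bound on the centred sum $U_j:=\sum_{i=1}^n[F_j(Y_i)-\bb{E}F_j(Y_1)]$, so $\{\td X^{(1)}\in A\}$ for $A\in\ca A$ is a hyperrectangle event on the doubled vector $(U_j,-U_j)_{j\leq p}/\sqrt n$. Using \eqref{dichte} together with the lower bound on $g$ from Assumption \ref{q2}, I can replace $\hat g_n(t_j)$ by $g(t_j)$ in the normalisation at a price of $o(1)$ in the thresholds (the scale factor $c_jm_j=O(\log n/\log\log n)$ only partially compensates the $o((\log n)^{-1})$ density error). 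Then I invoke the hyperrectangle Berry-Esseen of Chernozhukov-Chetverikov-Kato for $(U_j/\sqrt n)_{j\leq p}$ against a centred Gaussian $N\sim\mathcal N(0,\Sigma^n)$ with $\Sigma^n_{jk}=\mathrm{Cov}(F_j(Y_1),F_k(Y_1))$. The moment/tail conditions of that CLT reduce to estimates $\|F_j\|_{L^\infty}\lesssim h_j^{-d/2-r}$ and $\|F_j\|_{L^2}^2\asymp h_j^{-d-2r-2}$, promised by Lemma \ref{a11} via Assumptions \ref{3} and \ref{a8} (Parseval plus the polynomial decay of $\ca F(f_\ve)$).

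\medskip

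To bridge $N$ to $\td X$ I do covariance matching. After the $\sqrt{g(t_j)}V_j$-rescaling, the entries of $\Sigma^n$ become (up to a lower-order centring term) $\int F_jF_k g\,/(\sqrt{g(t_j)g(t_k)}\|F_j\|_{L^2}\|F_k\|_{L^2})$. Using Lipschitzness of $g$ (Assumption \ref{q2}) together with the effective $L^2$-localisation of $F_jF_k$ near $t_j$, I obtain $\int F_jF_k g\approx g(t_j)\int F_jF_k$, so the rescaled covariance matches the normalised covariance of the Brownian integrals $h_j^{d/2+r+1}\int F_j dB_x/V_j$ uniformly in $(j,k)$. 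A Gaussian-to-Gaussian comparison on hyperrectangles (Nazarov; or Chernozhukov-Chetverikov-Kato) finishes \eqref{a10}.

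\medskip

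For the boundedness claim, write $\td X_j=c_j(|\zeta_j|-m_j)$ with $\zeta_j=h_j^{d/2+r+1}\int F_j dB_x/V_j\sim\mathcal N(0,1)$, jointly Gaussian across $j$. The single-coordinate tail gives $\bb P(\td X_j>M)\lesssim h_j^d(\log h_j^{-d})^{-\sqrt 2 M(1+o(1))}$; a plain union bound over $p\leq n^K$ is too crude for $K>1$. I would instead bound $\bb E[\max_j\td X_j]$ through Dudley / Talagrand chaining applied to the Gaussian process $(s,t,h)\mapsto\zeta_{s,t,h}$, the point being that the $L^2$-covering number of $\{F_{s,t,h}/\|F_{s,t,h}\|_{L^2}\}$ at scale $h$ is of order $h^{-d}$, which cancels the $h_j^d$ factor in the single-tail bound and leaves a summable series over dyadic scale shells. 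Borell-Sudakov-Tsirelson concentration then converts the expectation bound into the desired uniform boundedness. The main obstacle I foresee is producing sharp metric-entropy bounds for $\{F_{s,t,h}\}$ in $L^2$ across different scales, since dilation-in-$h$, translation-in-$t$ and rotation-in-$s$ must be controlled jointly; with those in place, the remaining Gaussian-process steps are standard.
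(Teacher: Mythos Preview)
Your plan for \eqref{a10} is essentially the paper's own argument: double the index set to turn the absolute values into a hyperrectangle event, apply Chernozhukov--Chetverikov--Kato, and then pass from the empirical Gaussian to the white-noise process. Two comments. First, your stated bound $\|F_j\|_{L^\infty}\lesssim h_j^{-d/2-r}$ is off; Lemma \ref{2} \textit{(i)} gives $h_j^{-d-r-1}$, so that after the normalisation $h_j^{d/2+r+1}$ one has $|X_{i,j}|\lesssim h_j^{-d/2}$, which is what feeds into the $B_n$ of CCK. Second, for the Gaussian-to-Gaussian step the paper does not use a CCK/Nazarov covariance comparison but instead a Slepian-type expectation bound (Adler--Taylor, Theorem 2.2.5) applied to the residual process $R_j=h_j^{d/2+r+1}\int F_j(\sqrt{g}-\sqrt{g(t_j)})\,\tn dB_x$; either route works, and your covariance-matching formulation is arguably cleaner once you note that the centring term $\bb E F_j(Y_1)$ is $O(1)$ by \eqref{h5}. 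The order in which you swap $\hat g_n$ for $g$ is immaterial.

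For the boundedness assertion there is a genuine gap. The quantity you must control is $\sup_{(s,t,h)} w(h)\big(|\zeta_{s,t,h}|-\td w(h)\big)$ with $w(h)\asymp\sqrt{\log h^{-d}}/\log\log h^{-d}$. This is \emph{not} the supremum of a Gaussian process: it is a scale-recentred, scale-reweighted functional of one. Your proposed ``Dudley chaining $+$ Borell--Sudakov--Tsirelson'' pipeline breaks at the last step: Borell--TIS gives concentration with rate $\exp(-t^2/2\sigma_\ast^2)$ where $\sigma_\ast^2=\sup_j\mathrm{Var}(w(h_j)\zeta_j)=\sup_j w(h_j)^2$, and this blows up like $\log n/(\log\log n)^2$ as $h_{\min}\to 0$, so no a.s.\ bound follows. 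Your heuristic ``covering number $h^{-d}$ at scale $h$ cancels the factor $h_j^d$ in the single-coordinate tail'' is exactly right, but making it rigorous requires a scale-sliced chaining that tracks the weights $w(h),\td w(h)$ shell by shell; this is precisely the content of the D\"umbgen--Spokoiny multiscale theorem (stated in the paper as Theorem \ref{k5}), whose condition \textit{(iii)} encodes the entropy-at-scale-$h$ balance you describe via $\sigma(s,t,h)^2=h^d$. The paper first dominates the finite maximum by the continuous supremum over $S^{d-1}\times[0,1]^d\times(0,1]$ (which does not depend on $n$), then verifies the three hypotheses of Theorem \ref{k5}; most of the work is the increment bound \textit{(ii)}, i.e.\ controlling $\|h_1^{d+r+1}F_{s^1,t^1,h_1}-h_2^{d+r+1}F_{s^2,t^2,h_2}\|_{L^2}$ in terms of the metric $\rho$ across $s$, $t$ and $h$ simultaneously --- exactly the ``main obstacle'' you already anticipated.
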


Theorem \ref{a9} will be used to control the multiple level of statistical tests for the hypotheses of the form
\eqref{t5neu} and \eqref{t5neu1}. To this end, let $\alpha\in(0,1)$ and denote by $\kappa_n(\alpha)$ the smallest number such that
\beq{t11}
 \bb{P}\Big(\max_{1\leq j\leq p}\td{X}_j\leq \kappa_n(\alpha)\Big)\geq 1-\alpha.
\eeq
By  Theorem \ref{a9}, $\kappa_n(\alpha)$ is bounded uniformly with respect to $n$ and $\alpha$. 
The $j$th hypothesis in \eqref{t5neu} is rejected, whenever
\beq{t8}n^{-1}\sum_{i=1}^nF_j(Y_i) <-\kappa^j_n(\alpha),\eeq where
\beq{t11j}
 \kappa^j_n(\alpha)=\frac{\sqrt{\hat{g}_n(t_j)}V_j}{\sqrt{n}}h_j^{-d/2-r-1}\Big(\frac{\log\log(e^eh_{j}^{-d})}{\sqrt{\log(eh_{j}^{-d}})}\kappa_n(\alpha)
+\sqrt{2\log(h_{j}^{-d})}\Big).
\eeq
Similarly, the $j$th hypothesis in \eqref{t5neu1} is rejected, whenever
\beq{t81}n^{-1}\sum_{i=1}^nF_j(Y_i) >\kappa^j_n(\alpha).\eeq
\begin{satz}\label{t10}
Assume that the tests  \eqref{t8} and \eqref{t81} for the hypotheses \eqref{t5neu} and \eqref{t5neu1} are performed simultaneously for $j=1,\hdots,p$.
The  probability of at least one false rejection of any of the tests is asymptotically at most $\alpha$, that is
\begin{align*}
\bb{P}\Big(\exists j\in\{1,\hdots, p\}:\; n^{-1}|\sum_{i=1}^nF_j(Y_i)|>\kappa^j_n(\alpha)\Big)\leq\alpha+o(1)
\end{align*}
for $n\rightarrow \infty$.
\end{satz}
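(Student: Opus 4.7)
The plan is to reduce the family-wise error event to a one-sided deviation event for the centered sums, rewrite that event in the natural coordinates of $\tilde X^{(1)}$, and then apply the coupling in Theorem \ref{a9} together with the definition of $\kappa_n(\alpha)$ in \eqref{t11}.

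First I would argue that every false rejection forces a large centered deviation. If $H^{s^j,t^j,h_j}_{0,incr}$ holds, then by \eqref{h0} we have $\mathbb{E}(F_j(Y_1))=-\int\partial_{s^j}f(x)\phi_{t^j,h_j}(x)\tn{d}x\geq 0$, so the rejection rule \eqref{t8} forces
\[
n^{-1}\sum_{i=1}^n\bigl(F_j(Y_i)-\mathbb{E}(F_j(Y_1))\bigr)<-\kappa_n^j(\alpha),
\]
and symmetrically for $H^{s^j,t^j,h_j}_{0,decr}$ via \eqref{t81}. In both cases this yields $\bigl|n^{-1}\sum_i F_j(Y_i)-\mathbb{E}(F_j(Y_1))\bigr|>\kappa_n^j(\alpha)$, so the event of at least one false rejection is contained in $\bigcup_{j=1}^p\bigl\{\bigl|n^{-1}\sum_i F_j(Y_i)-\mathbb{E}(F_j(Y_1))\bigr|>\kappa_n^j(\alpha)\bigr\}$.

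Second, I would unwind the definitions \eqref{t11j} of $\kappa_n^j(\alpha)$ and \eqref{labelXj1} of $\tilde X_j^{(1)}$ to see that the $j$th event in this union is algebraically identical to $\{\tilde X_j^{(1)}>\kappa_n(\alpha)\}$. Thus the probability to be controlled equals $\mathbb{P}\bigl(\max_{1\leq j\leq p}\tilde X_j^{(1)}>\kappa_n(\alpha)\bigr)$. Now apply Theorem \ref{a9} with the half-open hyperrectangle $A=(-\infty,\kappa_n(\alpha)]^p\in\mathcal{A}$: the uniform Gaussian approximation \eqref{a10} together with \eqref{t11} gives
\[
\mathbb{P}\bigl(\max_j\tilde X_j^{(1)}\leq\kappa_n(\alpha)\bigr)=\mathbb{P}\bigl(\max_j\tilde X_j\leq\kappa_n(\alpha)\bigr)+o(1)\geq 1-\alpha+o(1),
\]
and taking complements yields the claim.

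The only real delicacy is that $A$ itself depends on $n$ through $\kappa_n(\alpha)$, so the $o(1)$ cannot be produced by a pointwise statement; this is exactly what the supremum over $\mathcal{A}$ in \eqref{a10} is designed to handle, and the second assertion of Theorem \ref{a9} (almost-sure uniform boundedness of $\max_j \tilde X_j$) guarantees that $\kappa_n(\alpha)$ stays bounded, so there is no risk of escaping to a regime where the approximation degrades. Everything else is bookkeeping.
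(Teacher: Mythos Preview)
Your argument is correct and follows the same route as the paper: both reduce the false-rejection event to $\{\max_{1\le j\le p}\tilde X_j^{(1)}>\kappa_n(\alpha)\}$ and then invoke Theorem~\ref{a9} together with \eqref{t11}. You are in fact more careful than the paper in spelling out why a false rejection under the respective null forces a large \emph{centered} deviation (the paper's proof passes directly from the uncentered display to $\tilde X_j^{(1)}$), and your remark that the uniformity over $\mathcal A$ in \eqref{a10} absorbs the $n$-dependence of $\kappa_n(\alpha)$ is exactly the point.
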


Next we introduce a method for the detection and localization of the modes of the density. The main idea is to conduct the local
tests for modality proposed in \eqref{t5einmode} for a set of candidate modes which does not assume any prior knowledge
about the density. To be precise,
  we assume the following condition on the set $\ca{T}_n$:
for any fixed $h$ and $s$ the set $\{t:(s,t,h)\in\ca{T}_n\}$ is an equidistant grid in $[0,1]^d$ with grid width $h$.
Furthermore, for any fixed $t$ and $h$ the set $\{s:(s,t,h)\in\ca{T}_n\}$ is a grid in $S^{d-1}$ with grid width
converging to zero with increasing sample size.

This  grid  is now used  as follows to check  if  a point $x^0\in(0,1)^d$ is  a  mode of $f$. 
Let $\ca{T}_n^{x^0}\subset\ca{T}_n$ be the set of all triples  $(s,t,h)\in\ca{T}_n$ such that
$ch\geq\|x^0-t\|\geq2\sqrt{d} h$ for some $c>2\sqrt{d}$ sufficiently large and $\tn{angle}(x^0-t,s)\rightarrow0$ for $n\rightarrow\infty$.
By the condition on $\ca{T}_n$ defined above, the set  $\ca{T}_n^{x^0}$ is nonempty for sufficiently large $n$.
We now use the local tests \eqref{t81} for the hypotheses  \eqref{t5einmode} and decide for a mode at the point $x^0$ if the null hypotheses  in \eqref{t5einmode}  are rejected for 
all  triples in $\ca{T}_n^{x^0}$.
Note that by choosing the test locations as the vertices of an equidistant grid
no prior knowledge about the location of $x^0$ has to be assumed. Theorem \ref{t14} below  states that the procedure
detects all modes of the density
with asymptotic probability one as $n\rightarrow\infty$.

\begin{satz}\label{t14}

Let $x^0\in(0,1)^d$ denote an arbitrary mode of the density $f$ and assume that there exist functions  $g_{x^0}:\mathbb{R}^d\rightarrow\mathbb{R}$,   $\td{f}_{x^0}:\mathbb{R}\rightarrow\mathbb{R}$
such that  the density $f$
 has a representation of the form
\beq {frep}
 f(x)\equiv (1+g_{x^0}(x))\td{f}_{x^0}(\|x-x^0\|)
\eeq
 (in a neighborhood of $x^0$), $g_{x^0}$ is differentiable in a neighborhood of the point $x^0$ such that  $g_{x^0}(x)=o(1)$ and
$\langle\nabla g_{x^0}(x),{e}\rangle=o(\|x-x^0\|)$ if   $x\rightarrow x^0$  for all
${e}\in\mathbb{R}^d$ with $\|{e}\|=1$. In addition, let $\td{f}_{x^0}$  be differentiable in a neighborhood
of the point $0$ with $\td{f}_{x^0}'(h)\leq-ch(1+o(1))$ for  $h\rightarrow0$.

If the set $$\big\{(s,t,h)\in\ca{T}_n: h\geq C \log(n)^{1/(d+2r+4)}n^{-1/(d+2r+4)}\big\}$$ for some $C>0$ sufficiently large is nonempty, then
the procedure described in the previous paragraph detects
 the mode  $x^0$  with asymptotic probability one as $n\rightarrow\infty.$
\end{satz}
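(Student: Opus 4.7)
The plan is to demonstrate, uniformly over all triples $(s,t,h)\in\ca{T}_n^{x^0}$, that the deterministic signal $\bb{E}(T_{s,t,h}^n) = -\int_{\bb{R}^d}\partial_s f(x)\phi_{t,h}(x)\tn{d}x$ dominates the critical value $\kappa^j_n(\alpha)$ from \eqref{t11j} by a factor that grows with the constant $C$ appearing in the assumption on $h$. Combined with a uniform stochastic control of $|T_{s,t,h}^n - \bb{E}(T_{s,t,h}^n)|$ provided by Theorem~\ref{a9}, this will imply that the test~\eqref{t81} rejects at every triple in $\ca{T}_n^{x^0}$ with probability tending to one.

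For the signal, I would substitute the radial representation $f(x)=(1+g_{x^0}(x))\td{f}_{x^0}(\|x-x^0\|)$ and compute
\[
\partial_s f(x) = \langle\nabla g_{x^0}(x),s\rangle\,\td{f}_{x^0}(\|x-x^0\|) + (1+g_{x^0}(x))\,\td{f}_{x^0}'(\|x-x^0\|)\,\frac{\langle x-x^0,s\rangle}{\|x-x^0\|}.
\]
On the support $[t-h,t+h]^d$ of $\phi_{t,h}$ we have $\|x-x^0\|\asymp h$ because of the two-sided estimate $2\sqrt{d}h\leq\|x^0-t\|\leq ch$, and the angle condition forces $\langle x-x^0,s\rangle/\|x-x^0\|$ to approach a constant of modulus one uniformly on this support. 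The decay $\td{f}_{x^0}'(\rho)\leq -c\rho(1+o(1))$ controls the main term while the assumptions on $g_{x^0}$ render the $\nabla g_{x^0}$ contribution of order $o(h)$. Together this yields a uniform lower bound $\bb{E}(T_{s,t,h}^n)\geq c_1 h$ for all $(s,t,h)\in\ca{T}_n^{x^0}$ once $n$ is large.

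By Lemma~\ref{a11} the normalization constants $V_j$ are bounded away from $0$ and $\infty$, by Assumption~\ref{q2} and \eqref{dichte} the same holds for $\hat{g}_n(t_j)$, and by Theorem~\ref{a9} one has $\kappa_n(\alpha)=O(1)$. Reading off \eqref{t11j} gives $\kappa^j_n(\alpha)\leq c_2\,h_j^{-d/2-r-1}\sqrt{\log n/n}$, so the ratio $\bb{E}(T_{s,t,h}^n)/\kappa^j_n(\alpha)\gtrsim h_j^{d/2+r+2}\sqrt{n/\log n}$ is large as soon as $h_j\geq C(\log n/n)^{1/(d+2r+4)}$ with $C$ sufficiently large. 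Choosing $C$ large enough we may ensure $\bb{E}(T_{s,t,h}^n)\geq 3\kappa^j_n(\alpha)$ uniformly in $\ca{T}_n^{x^0}$.

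Finally, I would apply Theorem~\ref{a9} with a threshold $M_n\to\infty$ growing slower than $\sqrt{\log h_{\tn{min}}^{-d}}/\log\log(e^e h_{\tn{min}}^{-d})$. Since $\max_j\td{X}_j$ is almost surely bounded uniformly in $n$, the theorem yields $\bb{P}(\max_j\td{X}_j^{(1)}\leq M_n)\to 1$, which translates, with the choice of $M_n$, into $\max_j|T_j^n-\bb{E}(T_j^n)|\leq 2\kappa^j_n(\alpha)$ with probability tending to one. On that event $T_{s,t,h}^n\geq\bb{E}(T_{s,t,h}^n)-2\kappa^j_n(\alpha)\geq\kappa^j_n(\alpha)$ for every $(s,t,h)\in\ca{T}_n^{x^0}$, so every local test rejects and the mode $x^0$ is detected. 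The principal technical obstacle is the uniform lower bound $\bb{E}(T_{s,t,h}^n)\geq c_1 h$ on the signal: one must carefully handle the interplay between the distance band $2\sqrt{d}h\leq\|x^0-t\|\leq ch$, the vanishing-angle condition on $s$, and the local representation of $f$ around the mode, making sure that the perturbation $g_{x^0}$ and the Taylor remainder in $\td{f}_{x^0}'$ do not wipe out the leading contribution of order $h$ after integration against the compactly supported kernel $\phi_{t,h}$.
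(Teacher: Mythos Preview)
Your proposal is correct and follows essentially the same route as the paper. Both arguments (i) extract the signal bound $-\int\partial_s f(x)\phi_{t,h}(x)\,\tn{d}x\gtrsim h$ from the local representation \eqref{frep} together with the geometric constraints defining $\ca{T}_n^{x^0}$, (ii) bound the critical value by $\kappa_n^j\lesssim h_j^{-d/2-r-1}\sqrt{\log n/n}$ using the uniform boundedness of $\kappa_n(\alpha)$ from Theorem~\ref{a9}, and (iii) obtain uniform stochastic control of $|T_{s,t,h}^n-\bb{E}(T_{s,t,h}^n)|$ at the scale of $\kappa_n^j$ via Theorem~\ref{a9}. The only cosmetic difference is in step (iii): the paper replaces the fixed level $\alpha$ by a null sequence $\alpha_n\to 0$ and uses that $\kappa_n(\alpha_n)$ remains bounded (so $\bb{P}(\max_j\td{X}_j^{(1)}\le\kappa_n(\alpha_n))\geq 1-\alpha_n+o(1)\to 1$), whereas you introduce a diverging threshold $M_n$ and exploit the almost-sure boundedness of $\max_j\td{X}_j$ directly; these are two equivalent ways to squeeze probability one out of Theorem~\ref{a9}. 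One small correction: the growth constraint on $M_n$ has to be checked against $h_{\tn{max}}$ rather than $h_{\tn{min}}$, since the bound $|T_j^n-\bb{E}T_j^n|\leq 2\kappa_n^j(\alpha)$ is tightest for the largest bandwidth; but any $M_n\to\infty$ with $M_n=o\big(\log(h_{\tn{max}}^{-d})/\log\log(h_{\tn{max}}^{-d})\big)$ works, and such a choice is available because $h_{\tn{max}}\to 0$.
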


The method to detect the modes of the density proposed in Theorem \ref{t14} proceeds in two steps: the verification of the presence of a mode with 
asymptotic probability one in the asymptotic regime presented above and its localization at the rate 
 $n^{-1/(d+2r+4)}$ (up to some logarithmic factor) given by the grid width.

\section{Finite sample properties}\label{Sec4}
\def\theequation{4.\arabic{equation}}
\setcounter{equation}{0}
In this section we illustrate the finite sample properties of the proposed multiscale inference.
The performance of the  test for modality at a given point
$x^0$ (see the hypotheses in \eqref{t5einmode}) and the dependence of its power on the bandwidth and the error variance  is investigated.
We also illustrate  how simultaneous tests for 
hypotheses of the form \eqref{t5neu} and \eqref{t5neu1} can be used to obtain a graphical representation of the local 
monotonicity properties of the density.

We consider two-dimensional densities, i.e. $d=2$. The density $f_\ve$ of the errors in model \eqref{g10} is given by a 
symmetric bivariate Laplacian with scale
parameter $\sigma>0$ which is defined through its characteristic function
\beq{g26}
 \ca{F}(f_\ve)(y_1,y_2)=\frac{1}{1+\frac{1}{2}\sigma^2(y_1^2+y_2^2)}
\eeq
for $(y_1,y_2)\in\bb{R}^2$ (cf. \cite{Kotz2001}, Chapter 5). This means that $r=2$ and straightforward calculations show that 
\beq{g1}
 F_{s,t,h}(x_1,x_2)= \ca{F}^{-1}\Big(\tfrac{\ca{F}(\partial_s\phi_{t,h})}{\ol{\ca{F}(f_\ve)}}\Big)(x_1,x_2)=\Big(\partial_s-
   \frac{\sigma^2}{2}\big(\partial_{e^1}^2\partial_s+\partial_{e^2}^2\partial_s\big)\Big)\phi_{t,h}(x_1,x_2)
\eeq
for $(x_1,x_2)\in\bb{R}^2$. The test function is chosen as
\[
 \phi(x_1,x_2)=c_2(1-x_1^4)(1-x_2^4)\mathbbm{1}\big\{|x_1|\leq 1,|x_2|\leq 1\big\},
\]
where $c_2$ defines the normalization constant, that is 
$$c_2=\big\|(1-x_1^4)(1-x_2^4)\mathbbm{1}\big\{|x_1|\leq 1,|x_2|\leq 1\big\}\big\|_{L^1(\bb{R}^d)}^{-1}$$ (note that $\phi$
is smooth within its support). Moreover, the integration by parts formula gives
\[
  -\int_{\bb{R}^2} \partial_sf(x)\phi_{t,h}(x)\tn{d}x =  \int_{\bb{R}^2}f(x)\partial_s\phi_{t,h}(x)\tn{d}x
\]
 as $\phi$ vanishes on the boundary of its support. Finally, by the representation \eqref{g1} we find that
the deconvolution kernel possesses all properties that are used for the proof of Theorem \ref{a9} and 
therefore Theorem \ref{a9} is also satisfied for the function $\phi$.\\

Throughout this section the nominal level is fixed as $\alpha=0.05$.

\subsection{A local test for modality}
In this section we investigate the performance of a local test for the existence of a mode (more precisely a local maximum)
at a given location $x^0$ which is defined by testing several hypotheses of the form \eqref{t5einmode} simultaneously.
Moreover, the influence of the choice 
of the different parameters on the power of the test is also investigated. To be precise, we conduct four tests 
for the hypotheses \eqref{t5einmode} with a fixed bandwidth $h=h_0$. The postulated mode is given by the point
$x^0 =(0,0)^\top$ and the four directions and locations are chosen as  $s^1=t^1=(1,0)^\top$,
$s^2=t^2=(0,1)^\top$, $s^3=t^3=(-1,0)^\top$ and $s^4=t^4=(0,-1)^\top$. We conclude that $f$ has 
a  local maximum at the point $x^0=(0,0) ^\top$, whenever all hypotheses 
\[
 H_{0,decr}^{s^j,t^j,h_0},~j=1,\hdots,4,
\]
are rejected, that is
\beq{modtest}
 T_{s^j,t^j,h_0}^n>\kappa_n^j(\alpha)  ~\mbox{ for all  } j=1,\hdots,4,  
 \eeq
 where $\kappa_n^j(\alpha) $ is defined by  \eqref{t11j}.
An illustration of the considered situation is provided in Figure \ref{f2}.
\begin{figure}
\begin{center}
\includegraphics[width=0.3\textwidth]{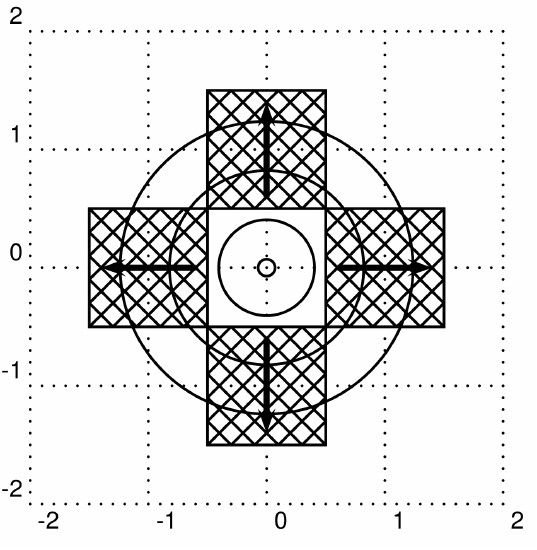}
\end{center}
\caption{\it Illustration of the four local tests for monotonicity used to define the test  \eqref{modtest} for $h_0=0.5$.
The crosshatched squares display the support of the functions $F_{s^j,t^j,h_0}$, $j=1,\hdots,4$, and the arrows the directional
vectors $s^j$, $j=1,\hdots,4$.} \label{f2}
\end{figure}
The quantiles $\kappa_n(0.05)$  defined in  \eqref{t11} are derived by $1000$ simulation runs based on normal
distributed
random vectors. In Table \ref{ta1} we display the normalized quantiles $\sqrt{n}\kappa_n^1(0.05)$ 
for the sample sizes $n=500,1000,4000$ 
observations and $h_0=0.5$. Here, the value of the parameter of the Laplacian error density has been chosen as
 $\sigma=0.075$.

\begin{table}[ht]
 \centering
\begin{tabular}{c| c }
$n$&  $\sqrt{n}\kappa_n^1(0.05)$  \\
\hline
500&0.039\\
1000&0.044\\
4000&0.041\\
\end{tabular}
\caption{\it Simulated quantiles $\sqrt{n}\kappa_n^1(0.05)$ of the test \eqref{modtest}. The density $f_\ve$ is defined
in \eqref{g26}.}\label{ta1}
\end{table}
 
The approximation of the level of the
 test for a mode at the point $x^0$  defined by \eqref{modtest} is investigated using  a uniform distribution on the square $[-2.5,2.5]^2$ for the density
 $f$. For power considerations
we sample  the $Z_i$ in model \eqref{g10} from a standard normal distribution. The results are displayed in the left part of 
Table \ref{ta5}. By its construction, the multiscale method is rather
conservative but nevertheless it is able to detect the mode with increasing sample size. In order to obtain a better 
approximation of the nominal level we propose 
a calibrated version of the test, where the quantiles are chosen such that the test keeps its nominal level  $\alpha=0.05$.
Note that this calibration does not require any knowledge about the unknown density $f$.
The simulated rejection probabilities are presented in the right part of Table \ref{ta5} for the parameters 
$h_0=0.5$ and  $\sigma=0.075$. We find that the calibrated test performs very well.

\begin{table}[h]
 \centering
\begin{tabular}{ c lc | c l c| c l }
$n$& level&power&level (cal.)&power (cal.) \\
\hline
500&0.3&39.4&4.2&74.7\\
1000&0.1&71.1&4.0&93.3\\
4000&0.4&99.9&3.1&100\\
\end{tabular}
\caption{\it Simulated level and power of the test \eqref{modtest}  for  a mode  at the point $x^0=(0,0)^\top$ of
 a $2$-dimensional density. The random variables $Z_i$ in model \eqref{g10} are standard normal distributed.
 Second and third column: test defined by \eqref{modtest};   fourth and  fifth  column:  test defined by \eqref{modtest}, where
 the quantiles $\kappa_n^j(\alpha)$ are replaced by calibrated quantiles.}
\label{ta5}
\end{table}

Next we fix the number of observations, that is $n=1000$,  the value of the parameter $\sigma=0.075$ and vary the bandwidth $h_0$ to
investigate its influence on the power of the test \eqref{modtest}. Recall that by the proposed choice of a Laplacian error density,
the deconvolution kernel has compact support in $[-1,1]^2$. Hence, by
dividing the bandwidth by 2 a fourth of the area is considered and (roughly) a fourth of the number of observations is used
 for the local test. Thus, we observe a decrease in power of the test for decreasing values of bandwidths 
which is illustrated in Table \ref{ta4}. 

\begin{table}[h]
 \centering
\begin{tabular}{ c lc | c l c| c l }
$h_0$& level&power&level (cal.)&power (cal.) \\
\hline
0.3&0.5&7.8&4.6&35.3\\
0.4&0.2&29.6&4.5&71.7\\
0.5&0.1&71.7&4.0&93.3\\
0.6&0.2&95.3&4.8&99.5\\
\end{tabular}
\caption{\it Dependence of the power of the test \eqref{modtest} for a mode at the point $x^0=(0,0)^\top$ on the bandwidth in the situation 
of Table \ref{ta5} where the number of observations is fixed to $n=1000$.
 Second and third column: test defined by \eqref{modtest};   fourth and  fifth  column:  test defined by \eqref{modtest}, where
 the quantiles $\kappa_n^j(\alpha)$ are replaced by calibrated quantiles.}\label{ta4}
\end{table}

We also investigate the influence of the scale parameter $\sigma$ on the power of the test \eqref{modtest}. To this end,
we fix the bandwidth as $h_0=0.5$ and the number of observations as $n=1000$ and vary the value of $\sigma$. The
results are shown 
in Table \ref{ta3} and we observe that an increase in the value of $\sigma$ decreases the power of the test.
On the other hand the power of the tests is very stable for small values of $\sigma$.

\begin{table}[h]
 \centering
\begin{tabular}{ c lc | c l c| c l }
$\sigma$& level&power&level (cal.)&power (cal.) \\
\hline
0.0 (direct setting)&0.4&77.7&4.7&94.1\\
0.075&0.1&71.7&4.0&93.3\\
0.15&0.2&71.1&3.6&92.8\\
0.3&0.4&62.3&3.8&87.2\\
1.0&0.3&31.4&4.5&59.4\\
\end{tabular}
\caption{\it Dependence of the power of the test \eqref{modtest} for a mode at the point $x^0=(0,0)^\top$  on the scale parameter in the situation 
considered in Table \ref{ta5} where the number of observations is fixed to $n=1000$.
 Second and third column: test defined by \eqref{modtest};   fourth and  fifth  column:  test defined by \eqref{modtest}, where
 the quantiles $\kappa_n^j(\alpha)$ are replaced by calibrated quantiles.}\label{ta3}
\end{table}

Next we investigate the influence of the shape of the modal region on the power of the test \eqref{modtest}. To this end, we fix 
the values of
$h_0=0.5$ and $\sigma=0.075$ and use
normal distributed random variables $Z_i$ with mean zero and non-diagonal covariance matrices 
  \renewcommand{\arraystretch}{0.75}
\beq{lp1}\Sigma_1=\Big( \;\begin{matrix} 0&0.5\\-1&1.5 \end{matrix} \;\Big)\tn{ and } ~~ 
\Sigma_2=\Big( \begin{matrix} -0.5&1\\-2&2.5 \end{matrix} \Big). \eeq

The simulated rejection probabilities are presented in Table \ref{ta6} and show that the absolute values of the eigenvalues of
the covariance matrix have an influence on the power of the test. In the case of $\ca{N}(0,\Sigma_1)$-distributed
random variables $Z_i$ (eigenvalues 0.5 and 1) the test 
performs better as for standard normal observations (with both eigenvalues equal to one). In the case of 
$\ca{N}(0,\Sigma_2)$-distributed random variables $Z_i$
 (eigenvalues 0.5 and 1.5) the test performs slightly worse than in the first case but still better as for 
standard normal observations due to the eigenvalue with absolute value smaller than one. We  note again the superiority of 
the calibrated test.
\renewcommand{\arraystretch}{1}
\begin{table}[h]
 \centering
\begin{tabular}{ c c  c|  c c  }
&\multicolumn{2}{c|}{$ \Sigma_1$}&\multicolumn{2}{c}{$ \Sigma_2$}\\
$n$ &power &power (cal.) &power &power (cal.)  \\
\hline
500&78.5&94.7&72.6&92.6\\
1000&96.7&99.3&96.5&98.9\\
4000&100&100&100&100\\
\end{tabular}
\caption{\it  Dependence of the power of the test \eqref{modtest} for a mode at the point $x^0=(0,0)^\top$ on the shape of the modal region. The random variables $Z_i$ are centered normal 
distributed with covariance matrices $\Sigma_1$ and $\Sigma_2$  given in \eqref{lp1}. Second and fourth column: test defined by \eqref{modtest}; 
third and  fifth  column:  test defined by \eqref{modtest}, where
 the quantiles $\kappa_n^j(\alpha)$ are replaced by calibrated quantiles.}\label{ta6}
\end{table}

We also investigate the influence of a (slight) misspecification of the position of the candidate mode on the power of 
the test \eqref{modtest} in  the situation considered in Table \ref{ta5} with  candidate mode $x^0=(0.2,0.2)^\top$. The results
are presented in Table \ref{t2}. We find that the slight misspecification of the position of the candidate mode affects
the power of the method only slightly.\\

\begin{table}[h]
 \centering
\begin{tabular}{c  c c}
&\multicolumn{2}{c}{$x^0=(0.2,0.2)^\top$}\\
$n$&power &power (cal.)   \\
\hline
500&34.9&70.8\\
1000&70.1&89.3\\
4000&99.9&100
\end{tabular}
\caption{\it Influence of a misspecification of the mode on the power of the  test \eqref{modtest}
for a mode at the point $x^0=(0.2,0.2)^\top$. The random variables $Z_i$ in model \eqref{g10} are standard normal distributed
and therefore the true mode is given by $(0,0)^\top$.
Second column: test defined by \eqref{modtest};  third column:  test defined by \eqref{modtest}, where
 the quantiles $\kappa_n^j(\alpha)$ are replaced by calibrated quantiles.}\label{t2}
\end{table}

Finally we consider a bimodal density and conduct simultaneously local tests for modality based on the hypotheses \eqref{t5einmode} 
for the candidate modes
$x^1=(0,0)^\top$ and $x^2=(3,0)^\top$. We conduct eight tests 
for the hypotheses \eqref{t5einmode}  for a fixed bandwidth $h=h_0=0.5$ with  $s^1=s^5=t^1=(1,0)^\top$,
$s^2=s^6=t^2=(0,1)^\top$, $s^3=s^7=t^3=(-1,0)^\top$, $s^4=s^8=t^4=(0,-1)^\top$ and $t^5=(4,0)^\top$, $t^6=(3,1)^\top$,
$t^7=(2,0)^\top$, $t^8=(3,-1)^\top $ and conclude that $f$ has 
a  local maximum in $x^1=(0,0) ^\top$ whenever all hypotheses 
\[
 H_{0,decr}^{s^j,t^j,h_0},~j=1,\hdots,4,
\]
are rejected, that is
\beq{modtest1}
 T_{s^j,t^j,h_0}^n>\kappa_n^j(\alpha)  ~\mbox{ for all  } j=1,\hdots,4  
 \eeq
  and  that $f$ has 
a  local maximum in $x^2=(3,0) ^\top$ whenever all hypotheses 
\[
 H_{0,decr}^{s^j,t^j,h_0},~j=5,\hdots,8,
\]
are rejected, that is
\beq{modtest2}
 T_{s^j,t^j,h_0}^n>\kappa_n^j(\alpha)  ~\mbox{ for all  } j=5,\hdots,8,  
 \eeq
 where the quantile $\kappa_n^j(\alpha) $ is defined by  \eqref{t11j}.
An illustration of the considered scales is provided in Figure \ref{f3}.
For the investigation of the approximation of the nominal level we
consider a uniform distribution on the rectangle $[-2.5,5.5]\times[-2.5,2.5]$ for the density $f$.
The  scaling factor in the Laplace density is given by $\sigma=0.075$.  For power
investigations we consider two bimodal densities given by  a uniform mixture of a  standard normal distribution
and a $\ca N((3,0)^\top,I)$ distribution (symmetric) and a uniform mixture of a
${\cal N}((0.0)^\top,1.2I)$ and a
${\cal N}((3.2,0.1)^\top,0.8I)$ distribution (asymmetric). The results for the calibrated version of the test 
are given in Table \ref{nid}.

\begin{figure}
\begin{center}
\includegraphics[width=0.45\textwidth]{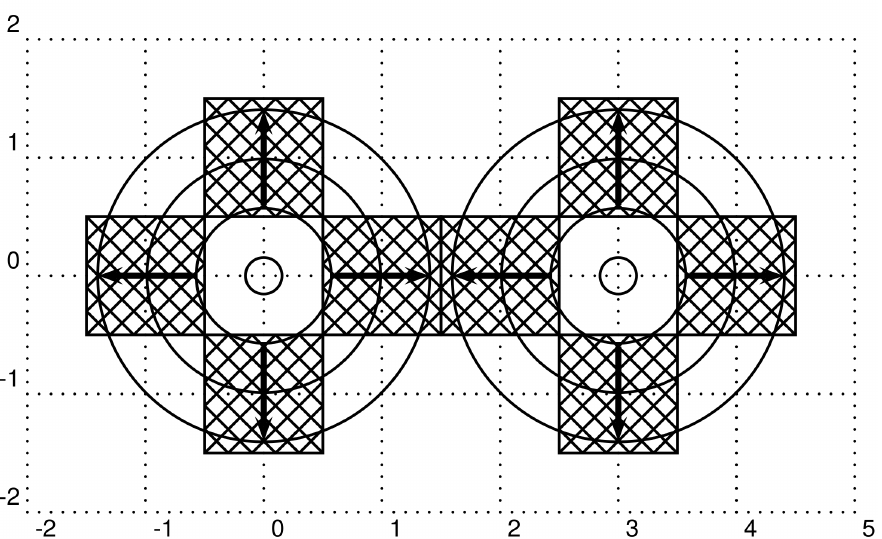}
\end{center}
\caption{\it  Illustration of the eight local tests for monotonicity used to create the tests \eqref{modtest1} and \eqref{modtest2}.
The crosshatched squares display the support of the functions $F_{s^j,t^j,h_0}$, $j=1,\hdots,8$, and the arrows the directional
vectors $s^j$, $j=1,\hdots,8$.} \label{f3}
\end{figure}

\begin{table}[h]
 \centering
\begin{tabular}{cccc|cc}
\multicolumn{2}{c}{}&\multicolumn{2}{c|}{Symmetric}&\multicolumn{2}{c}{Asymmetric}\\
$n$&level &power $x^1$ &power $x^2$ &power $x^1$&power $x^2$  \\
\hline
500&5.3&34.6&33.0&23.6&48.5\\
1000&5.2&48.7&49.9&39.0&72.9\\
4000&4.2&84.4&81.7&76.1&97.1\\
\end{tabular}
\caption{\it  Simulated level and power of the tests \eqref{modtest1} and \eqref{modtest2} for a mode at the
points $x^{1}=(0,0)^\top$ and $x^{2}=(3,0)^\top$, where
 the quantiles $\kappa_n^j(\alpha)$ are replaced by calibrated quantiles.
 The random variables $Z_i$ in model \eqref{g10} are  given by  a uniform mixture of a  standard normal distribution
and a $\ca N((3,0)^\top,I)$ distribution (symmetric) and a uniform mixture of a
${\cal N}((0.0)^\top,1.2I)$ and a
${\cal N}((3.2,0.1)^\top,0.8I)$ distribution (asymmetric).
}\label{nid}
\end{table}
 
We observe that in the symmetric case the test detects both modes with (roughly) the same power, whereas in the asymmetric
case the mode with smaller variance (even though there is a slight misspecification of its position) is detected more often.

A scatter plot of $n=4000$ observations from the convolution of the asymmetric bimodal density and a bivariate Laplace
distribution with 
scale parameter $\sigma=0.5$ is given in Figure \ref{g5}. Here, a look at the scatter plot does not give a hint
on the number of modes of the distribution. However, the test \eqref{modtest1},
where the quantiles $\kappa_n^j(\alpha)$ are replaced by calibrated quantiles,
is still able to detect
a mode at $(0,0)^\top$ in 48.4 percent of the repetitions and the test \eqref{modtest2} with calibrated quantiles detects a mode
in $(3,0)^\top$ in 81.4 percent of the repetitions. The simulated level for the calibrated quantiles is 4.1.
\begin{figure}
\begin{center}
\includegraphics[width=0.5\textwidth]{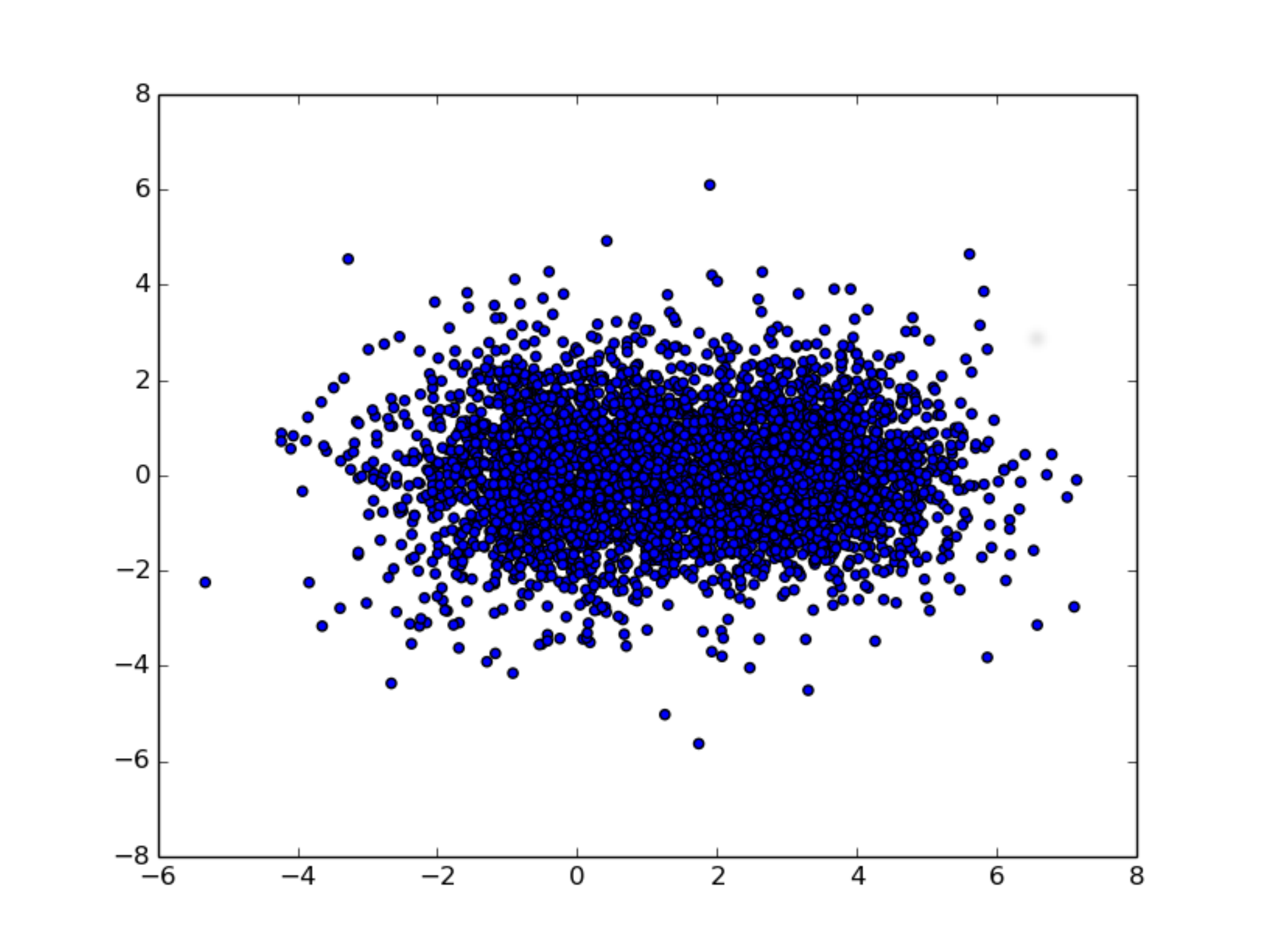}
\end{center}
\caption{\it $n=4000$ observations drawn from the convolution of a uniform mixture of a ${\cal N}((0.0)^\top,1.2I)$ and a
${\cal N}((3.2,0.1)^\top,0.8I)$ distribution and a bivariate Laplace distribution with scale parameter $\sigma=0.5$.} \label{g5}
\end{figure}

\subsection{Inference about local monotonicity  of a multivariate density}\label{g27}
The multiscale approach introduced in Section \ref{Sec3} can be used to obtain a graphical representation of the monotonicity behavior of a 
(bivariate) density. We construct a global map indicating monotonicity properties of the density $f$ by conducting the tests \eqref{t8} for the hypotheses \eqref{t5neu}
 for a fixed bandwidth of $h=0.5$.
The set of test locations $\ca{T}_t$ is defined as the set of vertices of an equidistant grid in the square 
$[-1,2]^2$ with  width $1$ and the set of test directions is given by
$\ca{T}_s=\{s^1=-s^3=\sqrt{2}^{-1}(1,1)^\top,s^2=-s^4=\sqrt{2}^{-1}(-1,1)^\top\}$.
The tests \eqref{t8} are conducted for every triple 
\[
 (s,t,h_0)\in\ca{T}_s\times\ca{T}_t\times\{h_0\}.
\]
The scaling factor for the Laplace density in the convolution model \eqref{g10} is given by
$\sigma=0.075$. We consider the tri-modal density with differently shaped modal regions displayed in Figure \ref{g24}. 
\begin{figure}
\begin{center}
\includegraphics[width=0.6\textwidth]{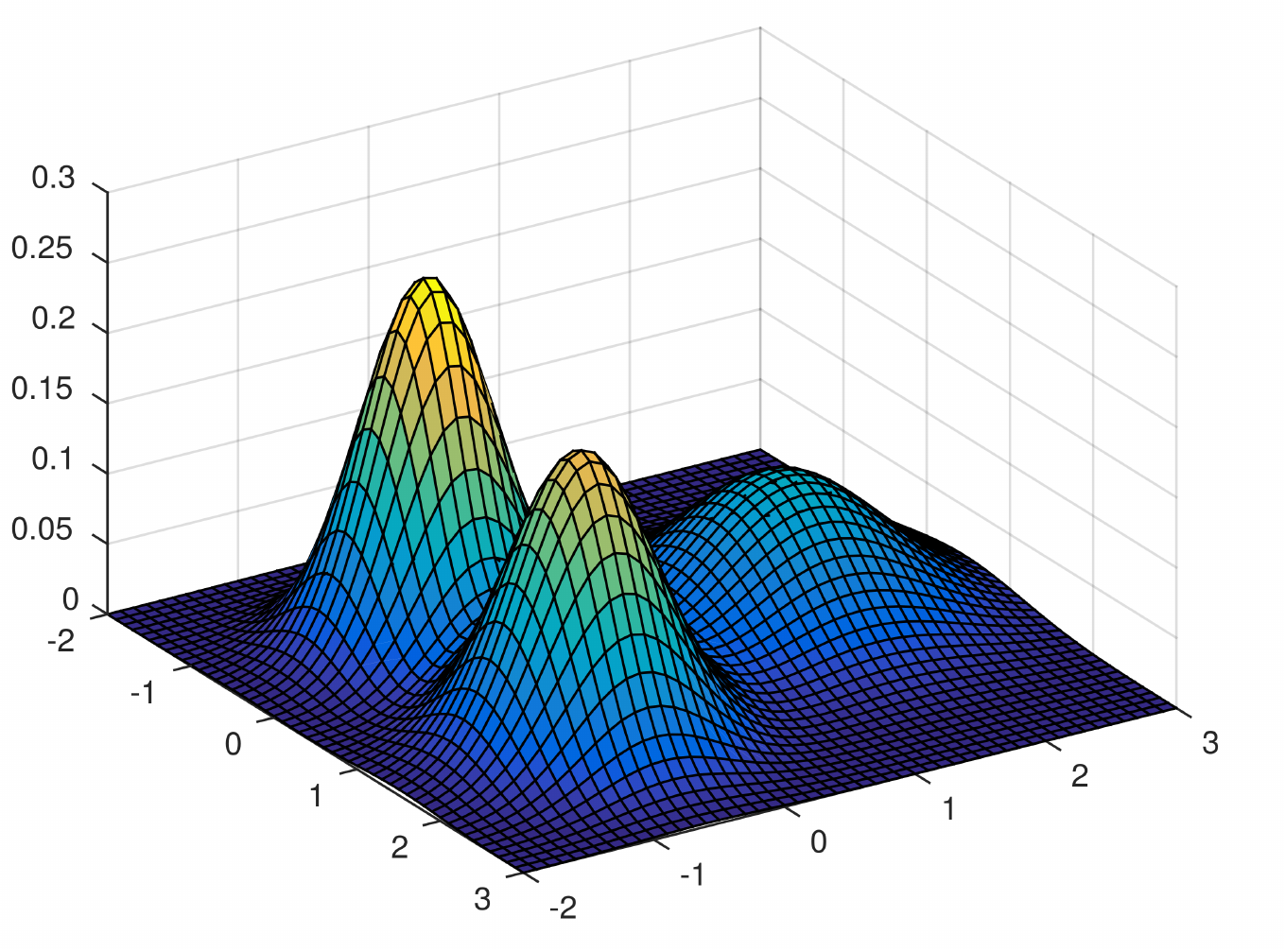}
\end{center}
\caption{\it The density of  a (uniform)   mixture  of  a ${\cal N}((-0.4,-0.57)^\top,0.2I),$
${\cal N}((1.5,-0.6)^\top,0.25I)$ and $ {\cal N}((0.45,1.6)^\top,0.5I)$ distribution.} \label{g24}\end{figure}

Figure \ref{g25} in Section \ref{Sec3} provides the graphical representation of the monotonicity behavior of the density $f$. Here, each arrow
at a location $t$ in direction $s$ displays a rejection of a hypothesis
 \eqref{t5neu}. The map indicates the existence of modes close to the  points $(-0.5,-0.5)^\top$, $(1.5,-0.5)^\top$ and 
$(0.5,1.5)^\top.$

\bigskip

{\bf Acknowledgements.}
This work has been supported in part by the
Collaborative Research Center ``Statistical modeling of nonlinear
dynamic processes'' (SFB 823, Project C1, C4) of the German Research Foundation (DFG).
The authors would like to thank Martina
Stein, who typed parts of this manuscript with considerable
technical expertise.

\setlength{\bibsep}{1pt}
\begin{small}
\bibliography{lit}
\end{small}

\section{Proof of Theorem \ref{a9}}\label{Sec5}
\def\theequation{5.\arabic{equation}}
\setcounter{equation}{0}
We split the proof of Theorem \ref{a9} in three parts. The first part is dedicated to several auxiliary results involving the deconvolution
kernel $F_{s,t,h}$. In the second part of the proof we show the approximation \eqref{a10}. Finally we conclude by proving the boundedness of the
limit distribution in the third part. \\
Throughout this section the symbols $\lesssim$ and $\gtrsim$ mean less or equal  and greater or equal, res\-pectively,
up to a multiplicative  
constant  independent of $n$ and $(s,t,h)$, and 
the symbol $|a_{s,t,h}|\asymp |b_{s,t,h}|$ means that $|a_{s,t,h}/b_{s,t,h}|$ is bounded from above and below by positive constants.

\subsection{Auxiliary results}
We begin with some basic transformations of the deconvolution kernel $F_{s,t,h}$. Recall that
\[
 F_{s,t,h}(.)=\ca{F}^{-1}\Big(\frac{\ca{F}(\partial_{s}\phi_{t,h})}{\ol{\ca{F}(f_\ve)}}\Big)(.)=
 {h^{-d-1}} \ca{F}^{-1}\Big(\frac{\int_{\bb{R}^d}e^{-iy.x}(\partial_{s}\phi)((x-t)/h)\tn{d}x}{\ol{\ca{F}(f_\ve)}(y)}\Big) (.)\]
 by  definition of the kernel $\phi_{t,h}$ and the Fourier transform. A substitution in the inner integral shows that
 \beq{a12}
   F_{s,t,h}(.)={h^{-1}}{} \ca{F}^{-1}\Big(\frac{e^{-iy.t}\ca{F}(\partial_{s}\phi)(hy)}{\ol{\ca{F}(f_\ve)}(y)}\Big) (.).
 \eeq
By the definition of the inverse Fourier transform and a substitution in the outer integral, we obtain
\beq{a14}
 F_{s,t,h}(x)=\frac{h^{-1}}{(2\pi)^{d}} \int_{\bb{R}^d}e^{ix.y}\frac{e^{-iy.t}\ca{F}(\partial_{s}\phi)(hy)}{\ol{\ca{F}(f_\ve)}(y)}\tn{d}y=
 \frac{h^{-d-1}}{(2\pi)^{d}} \int_{\bb{R}^d}e^{iy.\frac{x-t}{h}}\frac{\ca{F}(\partial_{s}\phi)(y)}{\ol{\ca{F}(f_\ve)}(y/h)} \tn{d}y.
\eeq
Furthermore, as $\partial_s\phi=\sum_{k=1}^ds_k\partial_{e^k}\phi$, where $e^k,\;k=1,\hdots,d,$ denotes the $k$th unit vector of
$\bb{R}^d$, we have
\[
 \ca{F}(\partial_s\phi)(y)=\sum_{k=1}^ds_kiy_k\ca{F}(\phi)(y),
\]
 where $i$ denotes the imaginary unit.
The following lemma presents some immediate consequences of the Assumptions \ref{3} and \ref{a8} made in Section \ref{sec3a}.
\begin{lemma}\label{20}~
Let $l\in\{1,\hdots,d\}$, $m\geq 2$ and $\td m=\lceil (d+1)/m\rceil$. It holds 

\begin{enumerate}
\item  $\quad\displaystyle S_s = \int_{\bb{R}^d}\big(1+\|{y}{}\|^2\big)^{{r}/{2}}\big|\ca{F}(\partial_{s}\phi)(y)\big|
\tn{d}y<\infty$ uniformly with respect to $s$;
 \item $\quad\displaystyle
 \int_{\bb{R}^d}\Big|
\frac{\partial^{\td m}}{\partial y_l^{\td m}}\Big(\frac{\ca{F}(\partial_{s}\phi)(y)}
{\ol{\ca{F}(f_\ve)}(y/h)}\Big)\Big|\tn{d}y\lesssim h^{-r}$.
\end{enumerate}

\end{lemma}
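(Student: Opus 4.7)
The plan is to handle the two parts separately, with Cauchy-Schwarz against the weighted $L^2$ bound of Assumption \ref{a8} doing the heavy lifting in both cases.

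For part (i), I would write $(1+\|y\|^2)^{r/2}|\ca{F}(\partial_s\phi)(y)|$ as a product of $(1+\|y\|^2)^{(2r+d+\delta)/4}|\ca{F}(\partial_s\phi)(y)|$ and $(1+\|y\|^2)^{-(d+\delta)/4}$ and apply Cauchy-Schwarz. The first factor is in $L^2$ uniformly in $s$ by Assumption \ref{a8} combined with the pointwise bound $|\ca{F}(\partial_s\phi)(y)|^2 \leq C\sum_k |\ca{F}(\partial_{e^k}\phi)(y)|^2$ noted just after the assumption, and the second factor lies in $L^2$ because $(d+\delta)/2>d/2$.

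For part (ii), I would first set $\psi(z):=\overline{\ca{F}(f_\ve)}(z)$ and expand $\partial_l^{\tilde m}[\ca{F}(\partial_s\phi)(y)\cdot \psi(y/h)^{-1}]$ by the Leibniz rule into a sum over $j=0,\dots,\tilde m$ of products $\partial_l^j\ca{F}(\partial_s\phi)(y)\cdot \partial_l^{\tilde m-j}[\psi(y/h)^{-1}]$. The key auxiliary estimate is
\[
\bigl|\partial_l^k[\psi(y/h)^{-1}]\bigr|\;\lesssim\; h^{-k}\bigl(1+\|y/h\|^2\bigr)^{(r-k)/2},\qquad 0\leq k\leq \tilde m,
\]
which I would obtain by Fa\`a di Bruno's formula: the $k$-th derivative of $\psi(y/h)^{-1}$ is a sum of terms of the form $h^{-k}(\partial_l^{i_1}\psi)(y/h)\cdots(\partial_l^{i_n}\psi)(y/h)/\psi(y/h)^{n+1}$ with $i_1+\cdots+i_n=k$, and each factor is controlled by Assumption \ref{3} (using $|\partial_l^i\psi(z)|\lesssim(1+\|z\|^2)^{-(r+i)/2}$ and $|1/\psi(z)|\lesssim(1+\|z\|^2)^{r/2}$); the algebra of exponents collapses to the asserted bound.

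Next, for each Leibniz summand I would apply Cauchy-Schwarz with the same weight $(1+\|y\|^2)^{r+(d+\delta)/2}$ used in part (i). This reduces matters to showing
\[
J_j(h) \;:=\; \int_{\bb{R}^d}\bigl(1+\|y/h\|^2\bigr)^{r-\tilde m+j}\bigl(1+\|y\|^2\bigr)^{-(r+(d+\delta)/2)}\tn{d}y\;\lesssim\;h^{d}+h^{-2(r-\tilde m+j)},
\]
which in turn I would establish by splitting the integration into $\|y\|\leq h$ (where $1+\|y/h\|^2\asymp 1$ and the volume is $h^d$) and $\|y\|>h$ (where $(1+\|y/h\|^2)\asymp\|y/h\|^2$, giving the factor $h^{-2(r-\tilde m+j)}$ after the change of variables, with the remaining $y$-integral finite because of the exponent conditions $j\leq\tilde m$, $\delta>0$). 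Combining $\sqrt{J_j(h)}$ with the prefactor $h^{-(\tilde m-j)}$ from the derivative bound yields either $h^{-(\tilde m-j)+d/2}$ or $h^{-r}$; in both regimes an elementary inequality (using $h\leq 1$ and the very case distinction defining the regime) shows that the first quantity is also $\lesssim h^{-r}$, and the finite sum over $j$ gives the claim.

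The main obstacle is the last step: keeping track of the two competing exponents in $\sqrt{J_j(h)}$ and verifying that in every case the combined $h$-power is at worst $h^{-r}$. A secondary technical annoyance is the careful Fa\`a di Bruno bookkeeping to obtain the sharp $(r-k)/2$ exponent in the derivative estimate for $\psi(y/h)^{-1}$; using a cruder bound (e.g.\ $(1+\|y/h\|^2)^{r/2}$ uniformly in $k$) loses the cancellation and fails to recover the claimed $h^{-r}$.
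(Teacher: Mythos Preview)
Your proof of part \textit{(i)} is correct and identical to the paper's argument.

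For part \textit{(ii)}, your setup matches the paper: the Leibniz expansion and the Fa\`a di Bruno estimate
\[
\bigl|\partial_l^{k}[\psi(y/h)^{-1}]\bigr|\lesssim h^{-k}\bigl(1+\|y/h\|^2\bigr)^{(r-k)/2}
\]
are exactly right and are the decisive ingredients. The gap is in your subsequent uniform Cauchy--Schwarz step. Write $k=\td m-j$. Your bound $J_j(h)\lesssim h^{d}+h^{-2(r-k)}$ is correct, but the combination $h^{-k}\sqrt{J_j(h)}\lesssim h^{-k+d/2}+h^{-r}$ is only $\lesssim h^{-r}$ when $r\geq k-d/2$. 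When $r<k-d/2$ the inequality you appeal to (``the very case distinction defining the regime'') goes the \emph{wrong} way: in that regime the $h^{d}$ term dominates $J_j(h)$ and the resulting bound $h^{-k+d/2}$ is strictly larger than $h^{-r}$. This situation does occur under the paper's assumptions: for instance $d=2$, $m=2$ gives $\td m=2$, and with $k=\td m$, $r\in(0,1)$ one has $r<k-d/2=1$ while Assumption~\ref{3} only requires $r>0$.

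The paper avoids this loss by \emph{not} using Cauchy--Schwarz near the origin in the case $r<k$. Instead it splits at $B_1(0)$: on $B_1(0)^{C}$ one has directly $h^{-k}(1+\|y/h\|^2)^{(r-k)/2}\lesssim h^{-r}$, and the part-\textit{(i)} argument handles the remaining integral. On $B_1(0)$ the paper uses the pointwise bound $h^{-k}(1+\|y/h\|^2)^{(r-k)/2}\leq h^{-r}\|y\|^{r-k}$ together with the boundedness of $\partial_l^{\td m-k}\ca{F}(\partial_s\phi)$ (from the compact support of $\phi$), reducing the matter to
\[
\int_{B_1(0)}\|y\|^{r-k}\,\tn{d}y\ \lesssim\ \int_0^1\rho^{d-1+r-k}\,\tn{d}\rho\ <\ \infty,
\]
which holds because $k\leq\lceil(d+1)/2\rceil$ forces $r-k+d>0$ under Assumption~\ref{3}. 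Replacing your global Cauchy--Schwarz by this local argument on the unit ball closes the gap.
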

\begin{myproof}[\textbf{Proof of Lemma \ref{20}}]\hfill
 \textit{(i)}: An application of Cauchy-Schwartz's inequality yields  for any $\delta>0$
\bal
  S_s =& \int_{\bb{R}^d}\big(1+\|y\|^2\big)^{{r}/{2}+({d+\delta})/{4}}\big(1+\|y\|^2\big)^{-({d+\delta)/}{4}}\big|{\ca{F}(\partial_{s}\phi)(y)} \big|\tn{d}y
\\\leq& \Big(\int_{\bb{R}^d}\big(1+\|y\|^2\big)^{r+({d+\delta})/{2}}\big|{\ca{F}(\partial_{s}\phi)(y)} \big|^2\tn{d}y\Big)^{{1}/{2}}
\big\|\big(1+\|y\|^2\big)^{-({d+\delta})/{4}}\big\|_{L^2(\bb{R}^d)}.
\end{align*}
By Assumption \ref{a8}, there exists a constant $\delta>0$ such that the latter integral is bounded uniformly with 
respect to $s$. Hence, the assertion follows from the integrability of the function
$(1+\|y\|^2)^{-({d+\delta})/{2}}$.\\

 \textit{(ii)}:  By Leibniz's rule we have

\[
\Big|\frac{\partial^{\td m}}{\partial y_l^{\td m}}\Big(\frac{\ca{F}(\partial_{s}\phi)(y)}{\ol{\ca{F}(f_\ve)}(y/h)}\Big)\Big|\lesssim
\sum_{k=0}^{\td m}\Big|\frac{\partial^{\td m-k}}{\partial y_l^{\td m-k}}\ca{F}(\partial_{s}\phi)(y)
 \frac{\partial^{k}}{\partial y_l^{k}}\frac{1}{\ol{\ca{F}(f_\ve)}(y/h)}\Big|.
\]
Moreover, from Lemma \ref{s5} it follows that
\bal
  \Big| \frac{\partial^{k}}{\partial y_l^{k}}\frac{1}{\ol{\ca{F}(f_\ve)}(y/h)}\Big|\lesssim
  \sum_{(m_1,...,m_k)\in\ca{M}_k}\frac{1}{|\ol{\ca{F}(f_\ve)}(y/h)|^{m_1+\hdots+m_k+1}}h^{-k}\prod_{j=1}^k
  \Big|\Big(\frac{\partial^{j}}{\partial y_l^{j}}\ol{\ca{F}(f_\ve)}\Big)(y/h)\Big|^{m_j},
\end{align*}
where $\mathcal{M}_k$ is the set of all $k$-tuples of non-negative integers satisfying $\sum_{j=1}^k j m_j = k$. 
Assumption \ref{3} in Section \ref{sec3a} yields the estimates
\[
 \Big|\frac{\partial^{j}}{\partial y_l^{j}}\ol{\ca{F}(f_\ve)}(y)\Big|\lesssim \big(1+\|y\|^2
\big )^{-({r+j)/}{2}}\tn{ and }  \frac{1}{|\ol{\ca{F}(f_\ve)}(y)|}\lesssim \big(1+\|y\|^2\big)^{r/2}. \]
Thus, as $\sum_{j=1}^k j m_j = k$ for some $(m_1,\ldots,m_k) \in  \ca{M}_k$, we find
 \begin{align*}
 \Big| \frac{\partial^{k}}{\partial y_l^{k}}\frac{1}{\ol{\ca{F}(f_\ve)}(y/h)}\Big|
& \lesssim  h^{-k}\sum_{(m_1,...,m_k)\in\ca{M}_k} \big(1+\|\tfrac{y}{h}\|^2\big)^{(m_1+\hdots+m_k+1){r}{}/2}
\prod_{j=1}^k\big(1+\|\tfrac{y}{h}\|^2
 \big)^{-m_j({r+j})/{2}} \\
&\lesssim
   h^{-k}\sum_{(m_1,...,m_k)\in\ca{M}_k}\big(1+\|\tfrac{y}{h}\|^2\big)^{(m_1+\hdots+m_k+1){r}{}/2}
 \big(1+\|\tfrac{y}{h}\|^2\big)^{-(m_1+\hdots+m_k){r}{}/2 -k/{2}}\\
& \lesssim h^{-k}\big(1+\|\tfrac{y}{h}\|^2
 \big)^{(r-k)/2}.
 \end{align*}
Hence,
\[
\Big|\frac{\partial^{\td m}}{\partial y_l^{\td m}}\Big(\frac{\ca{F}(\partial_{s}\phi)(y)}{\ol{\ca{F}(f_\ve)}(y/h)}\Big)\Big|
\lesssim \sum_{k=0}^{\td m}h^{-k}\Big|\frac{\partial^{\td m-k}}{\partial y_l^{\td m-k}}\ca{F}(\partial_{s}\phi)(y)\Big|
\big(1+\|\tfrac{y}{h}\|^2 \big)^{(r-k)/2}.
\]
In the case $r\geq k$, the claim is now a direct consequence of the estimate 
\[
 h^{-k}\big(1+\|\tfrac{y}{h}\|^2\big)^{(r-k)/2}\lesssim h^{-r}(1+\|y\|^2)^{(r-k)/2},
\]
 similar arguments as given in proof of \textit{(i)}
and Assumption \ref{a8}.

If $r< k$ we divide the integration area into the ball $B_1(0)$ and its complement. For the integral
\[
 h^{-k}\int_{B_1(0)^C}\Big|\frac{\partial^{\td m-k}}{\partial y_l^{\td m-k}}\ca{F}(\partial_{s}\phi)(y)\Big|
\big(1+\|\tfrac{y}{h}\|^2 \big)^{(r-k)/2}\tn{d}y
\]
we have $h^{-k}\big(1+\|\tfrac{y}{h}\|^2\big)^{(r-k)/2}\lesssim h^{-r}$. Therefore, we can bound the integral over the complement of the
unit ball by the integral over $\bb{R}^d$ and proceed similarly to the first case.
It remains to consider the integral over the ball $B_1(0)$. To this end, notice that
\[
 h^{-k}\big(1+\|\tfrac{y}{h}\|^2 \big)^{(r-k)/2}\leq h^{-r}\|y\|^{r-k}.
\]
Hence, by the boundedness of $\frac{\partial^{\td m-k}}{\partial y_l^{\td m-k}}\ca{F}(\partial_{s}\phi)$ (which follows 
from the compactness of the
support of $\phi$)
 it remains to show that the integral
 \[
  \int_{B_1(0)}\|y\|^{r-k}\tn{d}y\lesssim \int_0^1 \rho^{d-1+r-k}\tn{d}\rho
 \]
is bounded, where  we used a polar coordinate transform to obtain the inequality. As $k\leq \lceil (d+1)/2\rceil$ and $r>0$,
the integral on the right hand side is obviously finite.
\end{myproof}

 Part \textit{(i)} of the following lemma shows that the constants  $V_1,\ldots,V_p$ defined in \eqref{Vj} are uniformly bounded from above
 and below.
 \begin{lemma}\label{a11}
 It holds
 \begin{enumerate}
  \item  $\|F_{s,t,h}\|_{L^2(\bb{R}^d)}\asymp h^{-{d}/{2}-r-1}$;
  \item $\big\|F_{s,t,h}\|x-t\|\big\|_{L^2(\bb{R}^d)}\lesssim h^{-{d}/{2}-r}$;
    \item $\|F_{s,t,h}F_{s',t',h'}\|_{L^1(\bb{R}^d)}\lesssim (hh')^{-{d}/{2}-r-1}$;
 \item $\big\|F_{s,t,h}F_{s',t',h'}\|x-t\|\|x-t'\|\big\|_{L^1(\bb{R}^d)}\lesssim (hh')^{-{d}/{2}-r}$.
\end{enumerate}

 \end{lemma}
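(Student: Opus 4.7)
The four statements decompose naturally: parts (i) and (ii) are the substantive $L^2$ estimates on the deconvolution kernel itself (and its product with the distance to the localization point), while (iii) and (iv) will follow immediately from (i) and (ii) by Cauchy--Schwarz, since
\[
\|F_{s,t,h}F_{s',t',h'}\|_{L^1}\leq\|F_{s,t,h}\|_{L^2}\|F_{s',t',h'}\|_{L^2}
\]
and likewise for the weighted version. So the work is concentrated in (i) and (ii).

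For (i), the plan is to apply Plancherel's identity to the representation~\eqref{a12}: since $\ca{F}(F_{s,t,h})(y)=h^{-1}e^{-iy.t}\ca{F}(\partial_s\phi)(hy)/\ol{\ca{F}(f_\ve)}(y)$, a substitution $u=hy$ yields
\[
\|F_{s,t,h}\|_{L^2(\bb{R}^d)}^2=\frac{h^{-2-d}}{(2\pi)^d}\int_{\bb{R}^d}\frac{|\ca{F}(\partial_s\phi)(u)|^2}{|\ca{F}(f_\ve)(u/h)|^2}\tn{d}u.
\]
Assumption~\ref{3} gives $|\ca{F}(f_\ve)(u/h)|^{-2}\asymp h^{-2r}(h^2+\|u\|^2)^r$, and for $h\leq h_{\tn{max}}\leq 1$ the factor $(h^2+\|u\|^2)^r$ is sandwiched between $\min(1,\|u\|^{2r})$ and $(1+\|u\|^2)^r$. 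The upper integrability $\int|\ca{F}(\partial_s\phi)(u)|^2(1+\|u\|^2)^r\,\tn{d}u<\infty$ follows uniformly in $s$ from Assumption~\ref{a8} with $m=0$ (as in the proof of Lemma~\ref{20}(i)), while the lower bound is provided by $\|\partial_s\phi\|_{L^2}\neq 0$ (Assumption~\ref{a8}). Together this yields $\|F_{s,t,h}\|_{L^2}^2\asymp h^{-d-2-2r}$.

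For (ii), the key trick is to convert multiplication by $\|x-t\|$ in physical space into a derivative on the Fourier side. Using the representation~\eqref{a14}, write $F_{s,t,h}(x)=h^{-d-1}G_h((x-t)/h)$ with $G_h=\ca{F}^{-1}(M_h)$ and $M_h(y):=\ca{F}(\partial_s\phi)(y)/\ol{\ca{F}(f_\ve)}(y/h)$. Changing variables $u=(x-t)/h$ gives
\[
\big\|F_{s,t,h}\|x-t\|\big\|_{L^2}^2=h^{-d}\sum_{l=1}^d\|u_l G_h\|_{L^2}^2=\frac{h^{-d}}{(2\pi)^d}\sum_{l=1}^d\Big\|\tfrac{\partial}{\partial y_l}M_h\Big\|_{L^2}^2
\]
by Plancherel and the identity $\ca{F}(u_l G_h)=i\,\partial_{y_l}M_h$. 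Applying the product rule to $\partial_{y_l}M_h$, bounding the derivatives of $1/\ol{\ca{F}(f_\ve)}(y/h)$ via Fa\`a di Bruno (as in Lemma~\ref{20}(ii)) with the estimates from Assumption~\ref{3}, and integrating against the weighted $L^2$ norms controlled by Assumption~\ref{a8} with $m=0,1$, one gets $\|\partial_{y_l}M_h\|_{L^2}^2\lesssim h^{-2r}$, whence (ii) follows.

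The main technical obstacle lies in (ii): the chain rule on $\ol{\ca{F}(f_\ve)}(y/h)$ produces an extra factor of $h^{-1}$ that must be absorbed by the polynomial decay coming from Assumption~\ref{3}, and one must check this carefully in both regimes $\|y\|\lesssim h$ and $\|y\|\gtrsim h$ (or equivalently by keeping the factor $(h^2+\|y\|^2)^r$ intact before splitting). Beyond that, everything is a routine Plancherel--product-rule calculation along the same lines as Lemma~\ref{20}.
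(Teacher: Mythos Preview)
Your proposal is correct and follows essentially the same route as the paper: Plancherel on the representation~\eqref{a12} for (i), the differentiation-on-the-Fourier-side trick on representation~\eqref{a14} for (ii), and Cauchy--Schwarz (the paper says H\"older) to reduce (iii) and (iv) to (i) and (ii). The paper's lower bound in (i) is argued slightly differently---it restricts to $B_a(0)^C$ for small $a$ rather than using the pointwise lower weight $\min(1,\|u\|^{2r})$---but the two arguments are equivalent, and in (ii) the paper carries out exactly the case split you anticipate (with the observation that for $d=1$ the integrability near the origin needs $r>\tfrac12$, which is precisely the constraint built into Assumption~\ref{3}).
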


\begin{proof}[\textbf{Proof of Lemma \ref{a11}}]   ~

  \textit{(i)}:  Using Plancherel's theorem and the representation \eqref{a12}, we obtain
 \begin{equation} \label{h1}
  \|F_{s,t,h}\|_{L^2(\bb{R}^d)}^2\asymp{h^{-2}}{}\Big\|\frac{e^{-iy.t}\ca{F}(\partial_{s}\phi)(h.)}{\ol{\ca{F}(f_\ve)}(.)}\Big\|_{L^2(\bb{R}^d)}^2
  =h^{-2}\ird\Big|\frac{\ca{F}(\partial_{s}\phi)(hy)}{\ol{\ca{F}(f_\ve)}(y)}\Big|^2\tn{d}y.
 \end{equation}

It now follows from  Assumption \ref{3} and a substitution that
 \[
  \|F_{s,t,h}\|_{L^2(\bb{R}^d)}^2\lesssim h^{-d-2r-2}\ird\big(1+\|y\|^2)^r\big|\ca{F}(\partial_{s}\phi)(y)\big|^2\tn{d}y,
 \]
 and the latter integral is bounded by Assumption \ref{a8} which concludes the proof of the upper bound.

 For the lower bound we find from (\ref{h1})   and Assumption \ref{3} that
\bal
 \|F_{s,t,h}\|_{L^2(\bb{R}^d)}^2 & 
 \gtrsim h^{-2}\ird \big(1+\|y\|^2\big)^{r}\big|\ca{F}(\partial_{s}\phi)(hy)\big|^2\tn{d}y  \\
& \gtrsim h^{-d-2}\ird \big(1+\|\tfrac{y}{h}\|^2\big)^{r}\big|\ca{F}(\partial_{s}\phi)(y)\big|^2\tn{d}y
 \gtrsim  h^{-d-2r-2}\int_{B_a(0)^C} \big|\ca{F}(\partial_{s}\phi)(y)\big|^2\tn{d}y
\end{align*}
for any constant $a>0$. Moreover,
\[
 \int_{B_a(0)^C} \big|\ca{F}(\partial_{s}\phi)(y)\big|^2\tn{d}y=\ird \big|\ca{F}(\partial_{s}\phi)(y)\big|^2\tn{d}y
 -\int_{B_a(0)} \big|\ca{F}(\partial_{s}\phi)(y)\big|^2\tn{d}y\gtrsim\|\partial_s\phi\|_{L^2(\bb{R}^d)}^2
\]
for a sufficiently small radius $a$ by the integrability of $|\ca{F}(\partial_{s}\phi)|^2$ (Assumption \ref{a8}) and Plancherel's theorem.
Furthermore, the mapping $s\mapsto\|\partial_s\phi\|_{L^2(\bb{R}^d)}$ is continuous such that by Assumption \ref{a8} $\|\partial_s\phi\|_{L^2(\bb{R}^d)}\geq c>0$
for a constant $c$ that does not depend on $s$.\\

 \textit{(ii)}: The representation \eqref{a14} and a substitution in the integral for the variable $x$ show
 \[
  \big\|F_{s,t,h}\|x-t\|\big\|_{L^2(\bb{R}^d)}^2=\frac{h^{-d}}{(2\pi)^{2d}}\ird\|x\|^2\Big|\ird e^{iy.x}\frac{\ca{F}(\partial_{s}\phi)(y)}{\ol{\ca{F}(f_\ve)}(y/h)}\tn{d}y\Big|^2\tn{d}x.
 \]
As $\|x\|^2=x_1^2+\hdots+x_d^2$, the differentiation rule for Fourier transforms yields
 \begin{align*}
  \big\|F_{s,t,h}\|x-t\|\big\|_{L^2(\bb{R}^d)}^2&=\frac{h^{-d}}{(2\pi)^{2d}}\sum_{k=1}^d\ird\Big|\ird e^{iy.x}\frac{\partial}{\partial y_k}\Big(\frac{\ca{F}(\partial_{s}\phi)(y)}{\ol{\ca{F}(f_\ve)}(y/h)}\Big)\tn{d}y\Big|^2\tn{d}x\\&
={h^{-d}}{}\sum_{k=1}^d\Big\|\ca{F}^{-1}\Big( \frac{\partial}{\partial y_k}\Big(\frac{\ca{F}(\partial_{s}\phi)(y)}{\ol{\ca{F}(f_\ve)}(y/h)}\Big)\Big)\Big\|_{L^2(\bb{R}^d)}^2  \\
&
\asymp{h^{-d}}{}\sum_{k=1}^d\Big\| \frac{\partial}{\partial y_k}
  \Big(\frac{\ca{F}(\partial_{s}\phi)(y)}{\ol{\ca{F}(f_\ve)}(y/h)}\Big)\Big\|_{L^2(\bb{R}^d)}^2,
 \end{align*}
 where the last identity follows from  Plancherel's theorem.
We now proceed similarly as in the proof of Lemma \ref{20} \textit{(ii)} and note that
 \[
  \frac{\partial}{\partial y_k}
  \frac{\ca{F}(\partial_{s}\phi)(y)}{\ol{\ca{F}(f_\ve)}(y/h)}=
  \frac{\partial}{\partial y_k}
  \ca{F}(\partial_{s}\phi)(y)\frac{1}{\ol{\ca{F}(f_\ve)}(y/h)}-
  \frac{\ca{F}(\partial_{s}\phi)(y)}{\big(\ol{\ca{F}(f_\ve)}(y/h)\big)^2}
   \frac{\partial}{\partial y_k}
 \big( \ol{\ca{F}(f_\ve)}(y/h)\big).
 \]
 An application of the Assumptions \ref{3} and \ref{a8} shows
 \[
  \Big\|  \frac{\partial}{\partial y_k}
 \ca{F}(\partial_{s}\phi)(y)\frac{1}{\ol{\ca{F}(f_\ve)}(y/h)}\Big\|_{L^2(\bb{R}^d)}^2\lesssim
  h^{-2r}\ird \Big| \frac{\partial}{\partial y_k}\ca{F}(\partial_{s}\phi)(y)\Big|^2\big(1+\|y\|^2\big)^r\tn{d}y\lesssim h^{-2r}.
 \]
Moreover, by Assumption \ref{3}, we have
\[
   \Big\|   \frac{\ca{F}(\partial_{s}\phi)(y)}{\big(\ol{\ca{F}(f_\ve)}(y/h)\big)^2}
   \frac{\partial}{\partial y_k}
 \big( \ol{\ca{F}(f_\ve)}(y/h)\big)\Big\|_{L^2(\bb{R}^d)}^2\lesssim
 h^{-2}\ird \big|\ca{F}(\partial_{s}\phi)(y) \big|^2\big(1+\|\tfrac{y}{h}\|^2\big)^{r-1}\tn{d}y.
\]
This concludes the proof for $r\geq 1$. For $r<1$ we
 split up the  area of integration into the ball $B_1(0)$ and its complement and find the required result for the integration over the complement using similar arguments as in     the proof of Lemma  \ref{20} \textit{(ii)}. For the integral over the unit ball  we also follow the line of arguments presented in
 the proof of Lemma  \ref{20} \textit{(ii)} which yields the required result provided that the integral on the right hand side
 of the inequality
   \[
  \int_{B_1(0)}\|y\|^{2r-2}\tn{d}y\lesssim \int_0^1 \rho^{d-1+2r-2}\tn{d}\rho
 \]
exists. This is the case for all $r>0$ if $d\geq2$   and all $r>\frac{1}{2}$ in the case $d=1.$\\

\textit{(iii)} and  \textit{(iv)}: These are  direct consequences of H\"older's inequality and \textit{(i)} resp. \textit{(ii)}.

\end{proof}

 The following Lemma will be used in the second part of the  proof of Theorem \ref{a9}.
 \begin{lemma}\label{2}For $1\leq j,k\leq p$ and  $m\geq 2$ we have for the function $F_j=F_{s^j,t^j,h_j}$ defined in \eqref{Fj}
\begin{enumerate}
 \item $|F_j(x)|\lesssim h_j^{-d-r-1}$ for all $x\in\bb{R}^d$;
 \item $\bb{E}(|F_j(Y_1)|^m)\lesssim h_{j}^{-(m-1)d-mr-m}$.

\end{enumerate}
\end{lemma}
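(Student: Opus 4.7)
The plan is to derive both estimates from the representation of $F_j$ as an inverse Fourier transform together with the bounds from Assumption \ref{3}, Lemma \ref{20}\textit{(i)} and Lemma \ref{a11}\textit{(i)}.

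For part \textit{(i)} I would start from the representation \eqref{a14}, take absolute values inside the integral (the phase factor $e^{iy.(x-t)/h}$ has modulus one), and obtain
\[
 |F_{s,t,h}(x)| \leq \frac{h^{-d-1}}{(2\pi)^d}\int_{\bb{R}^d}\Big|\frac{\ca{F}(\partial_s\phi)(y)}{\ol{\ca{F}(f_\ve)}(y/h)}\Big|\tn{d}y.
\]
By Assumption \ref{3}, $1/|\ol{\ca{F}(f_\ve)}(y/h)|\lesssim (1+\|y/h\|^2)^{r/2}\lesssim h^{-r}(1+\|y\|^2)^{r/2}$ for $h\leq 1$ (which holds as $h\leq h_{\tn{max}}=o(1)$). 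Thus the integral is bounded by $h^{-r}$ times the quantity $S_s$ from Lemma \ref{20}\textit{(i)}, which is uniformly finite in $s$. This yields $|F_j(x)|\lesssim h_j^{-d-r-1}$ uniformly in $x$.

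For part \textit{(ii)} the idea is the standard $L^\infty$/$L^2$ interpolation. Since $g$ is bounded above (a direct consequence of Assumption \ref{q2}, as noted after Assumption \ref{3}), I can write
\[
 \bb{E}(|F_j(Y_1)|^m)=\int_{\bb{R}^d}|F_j(y)|^m g(y)\tn{d}y\lesssim\|F_j\|_\infty^{m-2}\,\|F_j\|_{L^2(\bb{R}^d)}^2.
\]
Applying the estimate from \textit{(i)} to the first factor and Lemma \ref{a11}\textit{(i)} to the second factor gives
\[
 \bb{E}(|F_j(Y_1)|^m)\lesssim h_j^{-(m-2)(d+r+1)}\cdot h_j^{-d-2r-2}=h_j^{-(m-1)d-mr-m},
\]
which is the claimed bound.

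The argument is essentially mechanical once \textit{(i)} and the earlier $L^2$-estimate are in hand; the only mild obstacle is verifying the $h^{-r}$ transition for the reciprocal of $\ca{F}(f_\ve)$ in part \textit{(i)}, but this follows from $h\leq 1$ and Assumption \ref{3}. No further ingredients are needed beyond Lemmas \ref{20} and \ref{a11}.
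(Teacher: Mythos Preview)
Your proof of part \textit{(i)} coincides with the paper's: both start from \eqref{a14}, pull the modulus inside, use Assumption \ref{3} to extract the factor $h^{-r}$, and finish with the uniform bound on $S_s$ from Lemma \ref{20}\textit{(i)}.

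Your argument for part \textit{(ii)} is correct but genuinely different from the paper's. You use the interpolation
\[
\int_{\bb{R}^d}|F_j|^m g\;\leq\;\|g\|_\infty\,\|F_j\|_\infty^{m-2}\,\|F_j\|_{L^2(\bb{R}^d)}^2,
\]
and then feed in part \textit{(i)} and the already established $L^2$-asymptotic from Lemma \ref{a11}\textit{(i)}. The paper instead proceeds directly from \eqref{a14}: after the substitution $x\mapsto t^j+h_jx$ it must still control $\int_{\bb{R}^d}\big|\int e^{ix.y}\ca{F}(\partial_{s^j}\phi)(y)/\ol{\ca{F}(f_\ve)}(y/h_j)\,\tn{d}y\big|^m\tn{d}x$, which it does by splitting into $[-\delta,\delta]^d$ and its complement and performing $\lceil(d+1)/m\rceil$-fold integration by parts on the complement, invoking Lemma \ref{20}\textit{(ii)} for the resulting derivatives. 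Your route is shorter and more transparent because it recycles Lemma \ref{a11}\textit{(i)} rather than redoing the decay analysis; the paper's computation is self-contained but duplicates effort. Since Lemma \ref{a11} is stated and proved before this lemma, there is no circularity, and your approach is preferable here.
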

 \begin{proof}[\textbf{Proof of Lemma \ref{2}}]   ~

 \textit{(i)}: Using the representation  \eqref{a14} and   Assumption \ref{3} it follows  that
 \[
  |F_j(x)|\lesssim h_j^{-d-1} \int_{\bb{R}^d}\Big|\frac{\ca{F}(\partial_{s^j}\phi)(y)}{\ol{\ca{F}(f_\ve)}(y/h_j)} \Big|\tn{d}y
  \lesssim h_j^{-d-r-1} \int_{\bb{R}^d}\big(1+\|y\|^2\big)^{r/2}\big|{\ca{F}(\partial_{s^j}\phi)(y)} \big|\tn{d}y=h_j^{-d-r-1}S_{s^j}.
 \]
The claim follows from the uniform boundedness of $S_{s^j}$ shown in Lemma \ref{20} \textit{(i)}.
\\

\textit{(ii)}: Using the representation \eqref{a14}, the boundedness of the density $g$ and a substitution we get
\bal
 \int_{\bb{R}^d}\big|F_j(x) \big|^mg(x)\tn{d}x&\lesssim h_j^{-md-m}\int_{\bb{R}^d}\Big|\int_{\bb{R}^d}e^{iy.\frac{x-t^j}{h_j}}\frac{\ca{F}(\partial_{s^j}\phi)(y)}
{\ol{\ca{F}(f_\ve)}(y/h_j)}\tn{d}y\Big|^m\tn{d}x\\
&=h_j^{-(m-1)d-m}\int_{\bb{R}^d}\Big|\int_{\bb{R}^d}e^{ix.y}\frac{\ca{F}(\partial_{s^j}\phi)(y)}
{\ol{\ca{F}(f_\ve)}(y/h_j)}\tn{d}y\Big|^m\tn{d}x.
\end{align*}
The proof will be completed showing the estimate
\[
 \int_{\bb{R}^d}\Big|\int_{\bb{R}^d}e^{ix.y}\frac{\ca{F}(\partial_{s^j}\phi)(y)}
{\ol{\ca{F}(f_\ve)}(y/h_j)}\tn{d}y\Big|^m\tn{d}x\lesssim h_j^{-mr}.
\]
For this purpose we decompose the domain of  integration  for the variable $x$ in two parts: the cube $[-\delta,\delta]^d$ for some $\delta>0$ and its
complement. For the integral with respect to the cube we use the upper bound $\int_{\bb{R}^d}\big|\frac{\ca{F}(\partial_{s^j}\phi)(y)}
{\ol{\ca{F}(f_\ve)}(y/h_j)}\big|\tn{d}y\lesssim h_j^{-r}$ provided in the proof of \textit{(i)} which yields the required result.

For the integral with respect to $([- \delta, \delta]^d)^C$ note that
\[
\int_{([-\delta,\delta]^d)^C}\Big|\int_{\bb{R}^d}e^{ix.y}\frac{\ca{F}(\partial_{s^j}\phi)(y)}
{\ol{\ca{F}(f_\ve)}(y/h_j)}\tn{d}y\Big|^m\tn{d}x \leq  \sum_{k=1}^d \sum_{l=1}^d  \int_{A_{k,l}}\Big|\int_{\bb{R}^d}e^{ix.y}\frac{\ca{F}(\partial_{s^j}\phi)(y)}
{\ol{\ca{F}(f_\ve)}(y/h_j)}\tn{d}y\Big|^m\tn{d}x~,
\]
where the sets $A_{k,l}$ are defined by
\bal
A_{k,l} =
 \big \{x\in\bb{R}^d\,|\,|x_k|>\delta,|x_l|\geq|x_{l'}|\text{ for all }l'\neq l\big\}.
\end{align*}
Now $\td m=\lceil (d+1)/m\rceil$ fold integration by parts yields
\[
 \Big|\int_{\bb{R}^d}e^{ix.y}\frac{\ca{F}(\partial_{s^j}\phi)(y)}{\ol{\ca{F}(f_\ve)}(y/h_j)}\tn{d}y\Big|^m
 =\frac{1}{|x_l|^{m\td m}}\Big|\int_{\bb{R}^d}e^{ix.y}\frac{\partial^{\td m}}{\partial y_l^{\td m}}\Big(
\frac{\ca{F}(\partial_{s^j}\phi)(y)}
{\ol{\ca{F}(f_\ve)}(y/h_j)}\Big)\tn{d}y\Big|^m,
\]
provided that $\frac{\partial^{\td m}}{\partial y_l^{\td m}}\big(\frac{\ca{F}(\partial_{s^j}\phi)(y)}
{\ol{\ca{F}(f_\ve)}(y/h_j)}\big)\in L^1(\bb{R}^d)$, which holds by Lemma \ref{20} \textit{(ii)}. A further application of
Lemma \ref{20} \textit{(ii)} shows that
\bal
\int_{A_{k,l}}\Big|\int_{\bb{R}^d}e^{ix.y}\frac{\ca{F}(\partial_{s^j}\phi)(y)}
{\ol{\ca{F}(f_\ve)}(y/h_j)}\tn{d}y\Big|^m\tn{d}x&\lesssim h_j^{-mr}\int_{[-\delta,\delta]^C}\frac{|x_l|^{d-1}}{|x_l|^{d+1}}\tn{d}x_l,
\end{align*}
as $|x_{l'}|\leq |x_l|$ for all $l'\neq l$ and $|x_l|>\delta$ in $A_{k,l}$.

\end{proof}

\subsection{Proof of the approximation \textbf{\eqref{a10}}}
For the consideration of the absolute values we introduce the set
$$\ca{T}_n':=\ca{T}_n\cup\{(-s,t,h)\,|\,(s,t,h)\in\ca{T}_n\}=:\{(s^j,t^j,h_j)\,|\,j=1,\hdots,2p\}$$ and denote by $\ca{A}'$
 the set of all hyperrectangles in $\bb{R}^{2p}$ of the form
\[
 A=\{w\in\bb{R}^{2p}\,|\,a_j\leq w_j\leq b_j\text{ for all }1\leq j\leq 2p\}
\]
for some $-\infty\leq a_j\leq b_j\leq\infty  \ (1\leq j\leq 2p)$.

We will show below in Section \ref{chern} that the random vectors $X_i=(X_{i,1},\hdots,X_{i,2p})^\top\in\bb{R}^{2p}$, $i=1,\hdots,n$, with
\[
  X_{i,j}=h_{j}^{d/2+r+1}\big({F_j(Y_i)-\bb{E}(F_j(Y_1))}\big){}\quad (i=1,\hdots,n,j=1,\hdots,2p)
\]
fulfill
\beq{a21}
 \sup_{A\in\ca{A}'}\Big|\bb{P}\Big( \frac{1}{\sqrt{n}}\sum_{i=1}^nX_i\in A\Big)-\bb{P}\Big(\frac{1}{\sqrt{n}}\sum_{i=1}^nY_i'\in A\Big)\Big|
 \lesssim \Big(\frac{h_{\tn{min}}^{-d}\log^7(n)}{n}\Big)^{1/6}+\Big(\frac{h_{\tn{min}}^{-d}\log^3(n)}{n^{1-2/q}}\Big)^{1/3}
\eeq
for any $q>0$, where $Y_1', \ldots, Y_n'$ are independent random vectors,     $Y_i'=(Y'_{i,1},\hdots,Y'_{i,2p})^\top\sim \mathcal{N}(0,\bb{E}( X_iX_i^\top))$,
 $i=1,\hdots,n$. Note that we have
\bal
 \frac{1}{\sqrt{n}}\sum_{i=1}^nY'_{i}&\sim N(0,\bb{E}( X_1X_1^\top)),
 \end{align*}
 where
\bal
\bb{E}( X_1X_1^\top)=
\Big({(h_jh_k)^{d/2+r+1}}{}\big(\bb{E}(F_j(Y_1)F_k(Y_1))
 -\bb{E}(F_j(Y_1))\bb{E}(F_k(Y_1))\big)\Big)_{1\leq j,k\leq 2p},
\end{align*}
as the random variables $X_1,\hdots,X_n$ are i.i.d. and $Y'_1,\hdots,Y'_n$ are independent.

Introduce a Gaussian process $(\td{B}(\Phi))_{\Phi\in L^\infty(\bb{R}^d)}$ indexed by $L^\infty(\bb{R}^d)$ as a process whose mean and
covariance functions are $0$ and
\beq{covstr}
 \ird \Phi_1(x)\Phi_2(x)g(x)\tn{d}x-\ird \Phi_1(x)g(x)\tn{d}x\ird \Phi_2(x)g(x)\tn{d}x,
 \eeq
 respectively. Hence,
there exists a version of $\td{B}(\Phi)$ such that
\[
  \frac{1}{\sqrt{n}}\sum_{i=1}^nY'_{i}=\big(h_1^{d/2+r+1}\td{B}(F_1),\hdots,h_{2p}^{d/2+r+1}\td{B}(F_{2p})\big)^\top.
\]
To derive an alternative representation of the process $\td B$ recall the definition of the isonormal process $(B(\Phi))_{\Phi\in L^2(\bb{R}^d)}$  as a Gaussian process whose mean and
covariance functions are $0$ and $\ird \Phi_1(x)\Phi_2(x)\tn{d}x$, respectively (see, e.g. \cite{kho}, Section 5.1). In particular, note that $(B(\mathbbm{1}_A))_{A\in\ca{B}(\bb{R}^d)}$ defines white noise, where
$\ca{B}(\bb{R}^d)$ denotes the Borel-$\sigma$-field on $\bb{R}^d$. Throughout this paper, we will use the notation
$B(\Phi)=\ird \Phi(x)\tn{d}B_x$.

There exists a version of the isonormal process such that $\td{B}(\Phi)=B(\Phi\sqrt{g})-\ird\Phi(x) g(x)\tn{d}x B(\sqrt{g})$
for $\Phi\in  L^\infty(\bb{R}^d)$ (one proves easily
 that $(B(\Phi\sqrt{g})-\ird\Phi(x){g}(x)\tn{d}x B(\sqrt{g}))_{\Phi\in L^\infty(\bb{R}^d)}$ defines a Gaussian process
with the covariance kernel \eqref{covstr}). Thus,
 \[
  \max_{1\leq j\leq 2p}\big|\td{B}(F_j)-B(F_j\sqrt{g})\big|=  \max_{1\leq j\leq 2p}\Big|\ird F_j(x){g(x)}\tn{d}x B(\sqrt{g})\Big|.
 \]
From (\ref{h0}) we have
\begin{equation} \label{h5}
\Big|\int_{\mathbb{R}^d} F_j(x)g(x)dx \Big| =   | \mathbb{E} [F_j(Y_1)]| = \Big | \int_{\mathbb{R}^d} \partial_s f(x) \phi_{t,h} (x) dx \Big | = O(1)
\end{equation}
uniformly with respect to $s,t,h$ (by assumption).
 Furthermore,
\[
 B(\sqrt{g})\sim N(0,\ird g(x)\tn{d}x)\sim N(0,1),
\]
 which implies that
\[
 \bb{E}\big(  \max_{1\leq j\leq 2p}h_j^{d/2+r+1} \big|\td{B}(F_j)-B(F_j\sqrt{g})\big| \big)\lesssim h_{\tn{max}}^{d/2+r+1}.
\]
An application of Markov's inequality finally proves
 \begin{equation} \label{h2}
   \max_{1\leq j\leq 2p}h_j^{d/2+r+1}\big|\td{B}(F_j)-B(F_j\sqrt{g})\big|=O_{\mathbb{P}}(|\log(h_{\tn{max}})|^{1/2}h_{\tn{max}}^{d/2+r+1}).
\end{equation}
Here, we have investigated  convergence in probability w.r.t. the sup-norm. However,  standard arguments show that this implies
the convergence which is investigated in Theorem \ref{a9}.

In a second step we find that the normalization with $c_j:=(\sqrt{g(t^j)}V_j)^{-1}$, $j=1,\hdots,2p$,
has no influence on the convergence as
translation and multiplication preserve the interval structure. More precisely, for any set 
$A=[a_1,b_1]\times\hdots\times[a_{2p},b_{2p}]\in\ca{A}'$ 
we have
\begin{align}\begin{split}\label{g2}
 &\big\{\big(c_jh_j^{d/2+r+1}B(F_j\sqrt{g})\big)_{j=1}^{2p}\in A\big\}
\\ =&\Big\{\big( h_j^{d/2+r+1}B(F_j\sqrt{g})\big)_{j=1}^{2p}\in[c_1^{-1}a_1, c_1^{-1}b_1]\times\hdots\times[c_{2p}^{-1}a_{2p}, c_{2p}^{-1}b_{2p}]\Big\},
\end{split}\end{align}
where $[c_1^{-1}a_1,c_1^{-1}b_1]\times\hdots\times[c_{2p}^{-1}a_{2p},c_{2p}^{-1}b_{2p}]$
still defines an element of the set $\ca{A}'$. A similar result holds for the normalization of the
test statistic. 

In a third step we show in Section \ref{proof58} that the normalization with the density estimator yields to a distribution-free limit process. We
 firstly assume that the density $g$ is known and prove
 \beq{u2}
  \max_{1\leq j\leq 2p}\Big| h_j^{d/2+r+1}\frac{B(F_j\sqrt{g})}{\sqrt{g(t^j)}V_j}-h_j^{d/2+r+1}\frac{B(F_j)}{V_j}\Big|
  =O_{\mathbb{P}}\big(\sqrt{h_{\tn{max}}\log(n)\log\log(n)}\big)=o_{\bb{P}}(1).
 \eeq
Hence,  by the consideration of the symmetric set $\ca{T}_n'$     it follows from (\ref{a21}), (\ref{h2}) and \eqref{u2} that
\begin{equation}\label{h3}
 \sup_{A\in\ca{A}}\Big|\bb{P}\Big(\Big( \frac{1}{\sqrt{n{g}(t_j)}V_j}|\sum_{i=1}^n
X_{i,j}|\Big)_{j=1}^p\in A\Big)-\bb{P}\Big(\Big(h_{j}^{d/2+r+1}\frac{|B(F_{j})|}{V_j}\Big)_{j=1}^p\in A\Big)\Big|=
o(1),
\end{equation}
 as for any real valued
random variable $X$ and any $a\in\bb{R}$ it holds
\[
\{|X|\in(-\infty,a]\}=\{X\in(-\infty,a]\}\cap\{-X\in(-\infty,a]\}.
\]

Next we insert the bandwidth normalization terms. To this end, we introduce the notation
\[
w(h)=\frac{\sqrt{\log(eh_{}^{-d}})}{\log\log(e^eh_{}^{-d})},\quad \td{w}(h)=\sqrt{2\log(h^{-d})}
\]
and write $w_j=w(h_j),\td{w}_j=\td{w}(h_j)$. Similar arguments as in \eqref{g2} show that the insertion of the bandwidth
correction terms has no influence on the convergence. Thus
recalling the definition of $\tilde X_j = w_j \big(h_{j}^{d/2+r+1}\frac{|B(F_{j})|}{V_j} - \tilde w_j\big) $ in \eqref{xtilde}
we obtain from
\eqref{h3}
\begin{equation}\label{h4}
 \sup_{A\in\ca{A}}\Big|\bb{P}\Big(\Big(
 w_j\Big(\frac{1}{\sqrt{ng(t_j)}V_j}
|\sum_{i=1}^nX_{i,j}|-\td w_j \Big)\Big)_{j=1}^p \in A  \Big)
-\bb{P}\Big( \tilde X \in A\Big)\Big|=
o(1),
\end{equation}

and  it remains to replace the true density by its estimator. For this purpose we show that
$$
 \max_{1\leq j\leq p}\Big| w_j\Big(\frac{1}{\sqrt{ng(t_j)}V_j}
|\sum_{i=1}^nX_{i,j}|-\td w_j\Big)-\td{X}_{j}^{(1)}\Big|=O_{\mathbb{P}}\Big(\frac{1}{\log\log(n)}\Big),
$$
where $\td{X}_j^{(1)}$ is defined in  \eqref {labelXj1}. Note that
\[
w_j\frac{1}{\sqrt{n}V_j}
|\sum_{i=1}^nX_{i,j}|\Big|\frac{1}{\sqrt{g(t^j)}}-\frac{1}{\sqrt{\hat{g}_n(t^j)}}\Big|\lesssim
w_j\frac{1}{\sqrt{ng(t^j)}V_j}|\sum_{i=1}^nX_{i,j}|\|g-\hat{g}_n\|_\infty
\]
almost surely by the boundedness from below of $g$ (and therefore of $\hat{g}_n$ almost surely). A null addition of the term $\td{w}_j$
shows that the latter is equal to
\[
  w_j\Big(\frac{1}{\sqrt{ng(t_j)}V_j}
|\sum_{i=1}^nX_{i,j}|-\td{w}_j\Big)\|g-\hat{g}_n\|_\infty+w_j\td{w}_j\|g-\hat{g}_n\|_\infty.
\]
The claim follows now from the convergence of $\big( w_j\big(\frac{1}{\sqrt{ng(t_j)}V_j}
|\sum_{i=1}^nX_{i,j}|-\td{w}_j\big)\big)_{j=1}^p$ proven in \eqref{h4} and the a.s. boundedness of the maximum of the limiting process 
proven in Section \ref{u3} below. 
 Note that we used the fact that
\[
h\mapsto\frac{{\log(eh^{-d}})}{\log\log(e^eh^{-d})}
\]
is decreasing in a neighborhood of $0$ (cf. \cite{MR3113812}, Lemma B.11).\\

\subsubsection{Proof of \eqref{a21}}\label{chern} 
The proof of \eqref{a21} mainly relies on Proposition 2.1 in \cite{Chernozhukov2014}. The result is stated as follows.
\begin{satz}\label{1}
 Let $X_1,\hdots,X_n$ be independent random vectors in $\bb{R}^{2p}$ with  $\bb{E}(X_{i,j})=0$ and $\bb{E}(X_{i,j}^2)<\infty$
 for $i=1,\hdots,n,\;j=1,\hdots,2p$. Moreover, let $Y'_1,\hdots,Y'_n$ be independent random vectors in $\bb{R}^{2p}$
 with $Y'_i\sim N(0,\bb{E}(X_iX_i^\top)),\;i=1,\hdots,n$. Let $b,q>0$ be some constants and let $B_n\geq1$ be a sequence of constants,
 possibly growing to infinity as $n\rightarrow\infty$.
Assume that the following conditions are satisfied:
\begin{enumerate}
 \item $n^{-1}\sum_{i=1}^n\bb{E}(X_{i,j}^2)\geq b$ for all $1\leq j\leq 2p$;
 \item $n^{-1}\sum_{i=1}^n\bb{E}(|X_{i,j}|^{2+k})\leq B_n^k$ for all $1\leq j\leq 2p$ and $k=1,2$;
 \item $\bb{E}\big(\big(\max_{1\leq j\leq 2p}|X_{i,j}|/B_n\big)^q\big)\leq 2$ for all $i=1,\hdots,n$.
\end{enumerate}

Then,
\[
 \sup_{A\in\ca{A}'}\Big|\bb{P}\Big( \frac{1}{\sqrt{n}}\sum_{i=1}^nX_i\in A\Big)-\bb{P}\Big(\frac{1}{\sqrt{n}}\sum_{i=1}^nY_i'\in A\Big)\Big|
 \leq C(D_n^{(1)}+D_{n,q}^{(2)}),
\]
where the sequences $D_n^{(1)}$ and $D_{n,q}^{(2)}$ are given by
\[
 D_n^{(1)}=\Big(\frac{B_n^2\log^7(2pn)}{n}\Big)^{1/6},\quad D_{n,q}^{(2)}=\Big(\frac{B_n^2\log^3(2pn)}{n^{1-2/q}}\Big)^{1/3}
\]
and the constant $C$ depends only on $b$ and $q$.
\end{satz}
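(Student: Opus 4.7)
My plan is to follow the proof strategy of Chernozhukov, Chetverikov and Kato, of which the stated result is Proposition 2.1 in \cite{Chernozhukov2014}. Since $\ca{A}'$ is generated by products of one-sided coordinate inequalities, I would first reduce to comparing $\bb{E}[g(S_n/\sqrt{n})]$ with $\bb{E}[g(T_n/\sqrt{n})]$ for smooth test functions $g$ approximating the indicator of a rectangle, where $S_n=\sum_{i=1}^n X_i$ and $T_n=\sum_{i=1}^n Y'_i$. A convenient $g$ is built from the log-sum-exp soft-max $\Psi_\beta(x)=\beta^{-1}\log\bigl(\sum_j e^{\beta x_j}\bigr)$ applied to the tensorised upper and lower coordinate inequalities defining $A\in\ca{A}'$. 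Under the Gaussian law of $T_n/\sqrt{n}$, the smoothing error is of order $\beta^{-1}\sqrt{\log(2p)}$, which is what ultimately allows the final error bound to depend on $p$ only logarithmically.

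The core of the argument is a Lindeberg--Stein exchange: I interpolate from $S_n$ to $T_n$ by replacing $X_i$ with $Y'_i$ one index at a time and expand $g$ to third order in the swapped vector. Since the third partial derivatives of the smoothed functional grow like $\beta^2$, the remainder is dominated by $\beta^2 n^{-3/2}\sum_i \bb{E}[\max_j |X_{i,j}|^3]$, which is controlled through condition \textit{(ii)} by $\beta^2 B_n/\sqrt n$. Balancing this against the smoothing error yields the optimal choice $\beta\asymp (n/B_n^2)^{1/6}\log^{-1/2}(2pn)$ and produces the first term $D^{(1)}_n=(B_n^2\log^7(2pn)/n)^{1/6}$; the extra logarithmic factors come from closing the gap between the smoothed functional and the rectangle indicator via Gaussian anti-concentration. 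At this point condition \textit{(i)} enters decisively: the uniform positive lower bound on variances is exactly what is needed to invoke the sharp $\sqrt{\log p}$ anti-concentration inequality for Gaussian maxima, which converts the smoothed comparison into a rectangle comparison with no polynomial loss in $p$.

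For the heavy-tailed regime I would truncate each $X_{i,j}$ at level $B_n n^{1/q}$, apply the sub-Gaussian argument above to the truncated vectors (whose moments of order up to $4$ are still controlled by condition \textit{(ii)}), and use condition \textit{(iii)} together with Markov's inequality to bound the probability that truncation is active; together with a parallel Gaussian truncation handled via the variance bound from condition \textit{(ii)} this yields the second term $D^{(2)}_{n,q}=(B_n^2\log^3(2pn)/n^{1-2/q})^{1/3}$. The main obstacle throughout is the anti-concentration estimate: it is the only place where the dependence on $p$ is reduced from polynomial to logarithmic, which is exactly what makes the bound usable in the regime $p\le n^K$ required for Theorem \ref{a9}. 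Accordingly I would invoke the Chernozhukov--Chetverikov--Kato anti-concentration bound as a black box rather than attempt to re-derive this delicate ingredient from scratch.
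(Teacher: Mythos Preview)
The paper does not prove this theorem at all: it is quoted verbatim as Proposition~2.1 of \cite{Chernozhukov2014} and used as a black box in the proof of \eqref{a21}. Your outline of the Lindeberg--Stein interpolation with soft-max smoothing, Gaussian anti-concentration, and truncation is a faithful sketch of the original Chernozhukov--Chetverikov--Kato argument, but it goes well beyond what the present paper undertakes; for the purposes of this paper a one-line citation suffices.
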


For an application of Theorem \ref{1} we have to verify the condition \textit{(i)} and to find an appropriate
sequence $B_n$ for conditions \textit{(ii)} and \textit{(iii)}. For a proof of condition \textit{(i)}  notice that
\[
 \bb{E}(X_{1,j}^2)=h_j^{d+2r+2}\bb{E}\big((F_j(Y_1))^2\big)-h_j^{d+2r+2}\big(\bb{E}(F_j(Y_1))\big)^2\gtrsim
 h_j^{d+2r+2}\big(\bb{E}\big((F_j(Y_1))^2\big)-1\big),
\]
 where we used (\ref{h5}) in the inequality. Moreover, as the density of $g$ is bounded from below (Assumption \ref{q2}) we have
\bal
h_j^{d+2r+2}\bb{E}\big((F_j(Y_1))^2\big)&=h_j^{d+2r+2}\ird (F_j(x))^2g(x)\tn{d}x\\
&\gtrsim h_j^{d+2r+2} \int_{[-\delta,1+\delta]^d}(F_j(x))^2\tn{d}x\\
&= h_j^{d+2r+2}\ird(F_j(x))^2\tn{d}x- h_j^{d+2r+2}\int_{([-\delta,1+\delta]^d)^C}(F_j(x))^2\tn{d}x.
\end{align*}
In Lemma \ref{a11} \textit{(i)} we have proven that
$ \|F_j\|_{L^2(\bb{R}^d)}^2 \gtrsim h_j^{-d-2r-2},
$
and using the representation \eqref{a14} we obtain
\[
  \int_{([-\delta,1+\delta]^d)^C}(F_j(x))^2\tn{d}x\lesssim
  h_j^{-2d-2} \int_{([-\delta,1+\delta]^d)^C}\Big| \int_{\bb{R}^d}e^{iy.\frac{x-t^j}{h_j}}\frac{\ca{F}(\partial_{s^j}\phi)(y)}{\ol{\ca{F}(f_\ve)}(y/h_j)} \tn{d}y\Big|^2\tn{d}x.
\]
Moreover,  $[-t^j_1-\delta,-t^j_1+1+\delta]\times\hdots\times[-t^j_d-\delta,-t^j_d+1+\delta]\supseteq[-\delta,\delta]^d$ and a substitution show
\[
 \int_{([-\delta,1+\delta]^d)^C}\Big| \int_{\bb{R}^d}e^{iy.\frac{x-t^j}{h_j}}\frac{\ca{F}(\partial_{s^j}\phi)(y)}{\ol{\ca{F}(f_\ve)}(y/h_j)} \tn{d}y\Big|^2\tn{d}x\leq
   \int_{([-\delta,\delta]^d)^C}\Big| \int_{\bb{R}^d}e^{iy.\frac{x}{h_j}}\frac{\ca{F}(\partial_{s^j}\phi)(y)}{\ol{\ca{F}(f_\ve)}(y/h_j)} \tn{d}y\Big|^2\tn{d}x.
\]

We now follow the line of arguments presented in the proof of Lemma  \ref{2} \textit{(ii)} for $m=2$ and
note that by conducting integration by parts we get an additional factor $h_j^{d+1}$. Hence,
\beq{p1}
 \int_{([-\delta,1+\delta]^d)^C}(F_j(x))^2\tn{d}x\lesssim h_j^{-d-2r-1}.
\eeq
This concludes the proof of condition
\textit{(i)} as  $\bb{E}(X_{1,j}^2)\gtrsim 1-h_j-h_j^{d+2r+2}$ and $h_j\leq h_{\tn{max}}\rightarrow0$ for $n\rightarrow\infty$.

For a proof of condition \textit{(ii)} note that by part \textit{(ii)} of  Lemma \ref{2}  it follows that
\[
h_j^{(2+k)(d/2+r+1)} \bb{E}(|F_j(Y_1)|^{2+k})\lesssim h_{j}^{-kd/2} \tn{ for } k=1,2,
\]
and therefore $B_n$ can be chosen proportional to $h_{\tn{min}}^{-d/2}$.

An application of  Lemma \ref{2} \textit{(i)} yields
\[
 |X_{i,j}|\lesssim h_j^{-d/2}
\]
and therefore condition \textit{(iii)} of Theorem \ref{1} holds for any $q>0$ for the  choice of $B_n
=ch_{\tn{min}}^{-d/2}$, provided that the constant is chosen sufficiently large.

Hence, Theorem \ref{1} proves (recall that $p\leq n^K$)
\[
 \sup_{A\in\ca{A}'}\Big|\bb{P}\Big( \frac{1}{\sqrt{n}}\sum_{i=1}^nX_i\in A\Big)-\bb{P}\Big(\frac{1}{\sqrt{n}}\sum_{i=1}^nY_i'\in A\Big)\Big|
 \lesssim \Big(\frac{h_{\tn{min}}^{-d}\log^7(n)}{n}\Big)^{1/6}+\Big(\frac{h_{\tn{min}}^{-d}\log^3(n)}{n^{1-2/q}}\Big)^{1/3}
\]
for any $q>0$, which proves (\ref{a21}).\\

\subsubsection{Proof of \eqref{u2}}\label{proof58}
Define
\begin{align}\begin{split}\label{\"o1}
 R_j:=h_j^{d/2+r+1}\ird F_j(x)\big(\sqrt{g(x)}-\sqrt{g(t^j)}\big)\tn{d}B_x,
\end{split}
\end{align}
then the assertion follows from the statement
$$\max_{1\leq j\leq 2p}|R_j|=O_{\mathbb{P}}\big(\sqrt{h_{\tn{max}}\log(n)\log\log(n)}\big).$$
 Here, we used the fact that the constants $V_1,\hdots,V_{2p}$ are bounded uniformly from below (cf. Lemma \ref{a11}). 
 For this purpose, we will make use of a Slepian-type result.
Note that for all $\delta>0$
\begin{align}\begin{split}\label{u1}
 \bb{E}\big(R_j^2\big)=&\;h_j^{d+2r+2}\int_{[-\delta,1+\delta]^d}  \big(F_j(x)\big(\sqrt{g(x)}-\sqrt{g(t^j)}\big)\big)^2 dx\\&
+h_j^{d+2r+2}\int_{([-\delta,1+\delta]^d)^C}  \big(F_j(x)\big(\sqrt{g(x)}-\sqrt{g(t^j)}\big)\big)^2 dx. \end{split}\end{align}
For the first integral on the right hand side of \eqref{u1} we use the Lipschitz continuity of $g$ (Assumption \ref{q2}) and find
\[
 h_j^{d+2r+2}\int_{[-\delta,1+\delta]^d}  \big(F_j(x)\big(\sqrt{g(x)}-\sqrt{g(t^j)}\big)\big)^2\tn{d}x\lesssim
h_j^{d+2r+2} \int_{[-\delta,1+\delta]^d} \Big(F_j(x)\|x-t^j\|\frac{1}{2\sqrt{\xi}}\Big)^2\tn{d}x
\]
for some $\xi$ satisfying $|\xi - g(t^j)|\leq |g(x) - g(t^j)|$. If $\delta>0$ is sufficiently small, then $g$ is bounded from below on $[-\delta, 1+ \delta]^d$ (see the remark following   Assumption \ref{q2}), and  Lemma \ref{a11} \textit{(ii)} shows that an upper bound
of this term (up to some constant) is given by
\[
  h_j^{d+2r+2}\ird (F_j(x))^2\|x-t^j\|^2\tn{d}x\lesssim h_{\tn{max}}^{2} .
\]
The second integral on the right hand side of \eqref{u1} is bounded by $h_{\tn{max}}$
 which follows from \eqref{p1} and the boundedness of $g$ (Assumption \ref{q2}). Summarizing, we obtain  $$\bb{E}(R_j^2)\lesssim h_{\tn{max}}.$$
 Moreover, we can show by similar calculations as presented above and an  application of Lemma \ref{a11} \textit{(iv)} that
\[
 | \bb{E}\big(R_jR_k\big)|=(h_jh_k)^{d/2+r+1} \Big | \ird  F_j(x)\big(\sqrt{g(x)}-\sqrt{g(t^j)}\big)F_k(x)\big(\sqrt{g(x)}-\sqrt{g(t^k)}\big)\tn{d}x \Big | \lesssim h_{\tn{max}}.
\]
 Introducing  the random variables
\[
 \td{R_j}:=h_j^{d/2+r+2}\ird F_j(x)\tn{d}B_x,
\]
we obtain from Lemma \ref{a11} \textit{(i)}  and  \textit{(iii)}
\[
  \bb{E}\big(\td{R}_j^2\big) \lesssim h_{\tn{max}}^2, \bb{E}\big(\td{R}_j\td{R}_k\big)\lesssim h_{\tn{max}}^2.
\]
Hence,
\[
 \max_{1\leq j,k\leq 2p}\Big|\bb{E}\big((R_j-R_k)^2\big)-\bb{E}\big((\td{R}_j-\td{R}_k)^2\big)\Big|\lesssim h_{\tn{max}},
\]
and Theorem 2.2.5 in \cite{adler2007random} yields
\[
 \bb{E}\Big(\max_{1\leq j\leq 2p}R_j\Big)=\bb{E}\Big(\max_{1\leq j\leq 2p}\td{R}_j\Big)+
O\big(\sqrt{h_{\tn{max}}\log(n)}\big).
\]
Note that by the symmetry of the set $\ca{T}_n'$ with
respect to the direction we have $\bb{E}(\max_{1\leq j\leq 2p}R_j)=\bb{E}(\max_{1\leq j\leq 2p}|R_j|)$ and
 $\bb{E}(\max_{1\leq j\leq 2p}\td R_j)=\bb{E}(\max_{1\leq j\leq 2p}|\td R_j|)$, and we can consider expectations of positive random
 variables here.

 For an upper bound of $\bb{E}(\max_{1\leq j\leq 2p}\td{R}_j)$ we use the a.s. asymptotic boundedness of
\[
\max_{1\leq j\leq 2p}\frac{\sqrt{\log(eh_{j}^{-d}})}{\log\log(e^eh_{j}^{-d})}\Big(h_{j}^{-1}\frac{\td{R}_j}
{V_j}-\sqrt{2\log(h_{j}^{-d})}\Big)
\]
shown in Section \ref{u3} below, which implies
\[
 \bb{E}\Big(\max_{1\leq j\leq 2p}\td{R}_j\Big)=O\Big(\sqrt{\log(n)}h_{\tn{max}}\Big)
\]
and therefore $\bb{E}(\max_{1\leq j\leq 2p}R_j)=O(\sqrt{h_{\tn{max}}\log(n)})$. This
proves \eqref{u2} by an application of Markov's inequality.

\subsection{Boundedness of the approximating statistic}\label{u3}
In order to  prove that the approximating statistic $\max_{1\leq j\leq p}\td{X}_j$ considered in Theorem \ref{a9}
is almost surely bounded uniformly with respect to $n \in \bb{N} $ we note that for all $p\in\bb{N} $
$$\max_{1\leq j\leq p}\td{X}_j
\leq B ,
$$ where the random   variable $B$ is defined by
\[
B:=\sup_{(s,t,h)\in S^{d-1}\times[0,1]^d\times(0,1]}\frac{\sqrt{\log(eh_{}^{-d}})}{\log\log(e^eh_{}^{-d})}\Big(h_{}^{d/2+r+1}
\frac{|\int_{\bb{R}^d}F_{s,t,h}(x)\tn{d}B_x|}
{V_{s,t,h}}-\sqrt{2\log(h^{-d})}\Big),
\]
where the constant $V_{s,t,h}=h^{d/2+r+1}\|F_{s,t,h}\|_{L^2(\bb{R}^d)}$. $B$ does not depend on $n$ and we show below
that $B$ is almost surely bounded.
We  will make use of the following result (Theorem 6.1 and Remark 1, \cite{MR1833961}).

\begin{satz}\label{k5}
 Let $X$ be a stochastic process on a pseudometric space $(\ca{T},\rho)$ with continuous sample paths. Suppose that the
 following three conditions are satisfied.
 \begin{enumerate}
  \item There is a function $\sigma:\;\ca{T}\rightarrow(0,1]$ and a constant $K\geq 1$ such that
  \[
   \bb{P}\big(X(a)>\sigma(a)\eta\big)\leq K\exp(-\eta^2/2)\quad\text{for all }\eta>0\text{ and }a\in\ca{T}.
  \]
Moreover,
\[
 \sigma(b)^2\leq\sigma(a)^2+\rho(a,b)^2\quad\text{for all }a,b\in\ca{T}.
\]
\item For some constants $L,M\geq 1$,
\[
 \bb{P}\big(|X(a)-X(b)|>\rho(a,b)\eta\big)\leq L\exp(-\eta^2/M)\quad\text{for all }\eta>0\text{ and }a,b\in\ca{T}.
\]
\item For some constants $A,B,V>0$,
\[
 N\big((\delta u)^{{1}/{2}},\{a\in\ca{T}:\;\sigma(a)^2\leq \delta\}\big)\leq Au^{-B}\delta^{-V}\quad\text{for all }u,\delta\in(0,1],
\]
where $N(\ve,\ca{T}')$ denotes the packing number of the set $\ca{T}'\subseteq \ca{T}$.
 \end{enumerate}
Then, the random variable
\[
 \sup_{a\in\ca{T}}\Big(\frac{|X(a)|/\sigma(a)-(2\log(1/\sigma(a)^2))^{1/2}}
 {(\log(e/\sigma(a)^2))^{-{1}/{2}}\log\log(e^e/\sigma(a)^2)}\Big)
\]
is finite almost surely.
\end{satz}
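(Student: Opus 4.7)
The plan is to prove the statement by a dyadic peeling argument over the level sets of $\sigma$, combined with a chaining bound on each level. Since $\sigma$ takes values in $(0,1]$, I would decompose the index set into shells $\ca{T}_k := \{a \in \ca{T} : 2^{-(k+1)} < \sigma(a)^2 \leq 2^{-k}\}$ for $k \geq 0$. On $\ca{T}_k$ the normalizer $\sqrt{2\log(1/\sigma(a)^2)}$ is of order $\sqrt{2k\log 2}$ and the denominator in the statement is of order $(\log k)/\sqrt{k}$, so the a.s.\ finiteness of the supremum reduces to finding $C > 0$ such that the events
\[
E_k := \Bigl\{\sup_{a \in \ca{T}_k} |X(a)|/\sigma(a) > \sqrt{2k\log 2} + C(\log k)/\sqrt{k}\Bigr\}
\]
satisfy $\sum_k \bb{P}(E_k) < \infty$, after which Borel--Cantelli yields the conclusion.

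On each shell I would apply chaining anchored at a fixed point $a_k^{\ast} \in \ca{T}_k$. Condition \textit{(iii)} specialized to $\delta = 2^{-k}$ gives a polynomial entropy bound $N((2^{-k}u)^{1/2}, \ca{T}_k) \leq A u^{-B} 2^{kV}$ for $u \in (0,1]$, and combined with the sub-Gaussian increment estimate of condition \textit{(ii)}, a generic chaining (Dudley-type integral) yields $\bb{E}\bigl[\sup_{a \in \ca{T}_k}|X(a) - X(a_k^{\ast})|\bigr] \lesssim 2^{-k/2}\sqrt{k}$. A Borell--TIS-type concentration inequality, itself derivable from condition \textit{(ii)} and the same chaining tree, upgrades this into an exponential tail estimate for the deviation of this supremum from its mean.

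Combining the chaining bound with the marginal tail condition \textit{(i)} applied to $X(a_k^{\ast})$ produces, for a parameter $\lambda_k$ to be chosen,
\[
\bb{P}(E_k) \;\leq\; K\exp\bigl(-(k\log 2)(1+o(1))\bigr) + \exp\bigl(-c\lambda_k^2\, 2^k/k\bigr),
\]
and choosing $\lambda_k$ proportional to $2^{-k/2}(\log k)/\sqrt{k}$ makes both terms summable in $k$. The monotonicity relation $\sigma(b)^2 \leq \sigma(a)^2 + \rho(a,b)^2$ from condition \textit{(i)} plays an auxiliary role here, guaranteeing that $\rho$-balls of radius $\lesssim 2^{-k/2}$ stay within one or two adjacent shells, so the chaining is carried out inside a set whose entropy is controlled by \textit{(iii)} at level $\delta = 2^{-k}$.

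The main obstacle will be matching the sharp Gaussian constant. A direct Dudley bound would add to the leading term $\sqrt{2k\log 2}$ a correction of order $\sqrt{k}$ multiplied by a factor depending on the entropy exponents $B$ and $V$, which is far too large to be absorbed into the denominator $(\log k)/\sqrt{k}$ in the statement. To preserve the factor $\sqrt{2}$ exactly, the chaining within each shell must be truncated so that only scales finer than roughly $2^{-k/2}/\sqrt{k}$ contribute to the chaining sum, while the coarser scales are handled instead by applying the one-sided bound \textit{(i)} directly at the sharp threshold $\sqrt{2k\log 2}$. Balancing these two mechanisms across dyadic levels, so that the coarse-scale contribution uses the exact sub-Gaussian constant from \textit{(i)} and the fine-scale contribution is of lower order than $(\log k)/\sqrt{k}$, is the technical heart of the argument.
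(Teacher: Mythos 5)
First, a point of reference: the paper does not prove Theorem \ref{k5} at all. It is quoted verbatim as an external result (Theorem 6.1 and Remark 1 of \cite{MR1833961}, i.e.\ D\"umbgen and Spokoiny, 2001), so there is no internal proof to compare yours against; your proposal has to stand on its own. As it stands it is a plan rather than a proof. You correctly identify the overall architecture (dyadic peeling over the level sets of $\sigma$, chaining within each shell, Borel--Cantelli), and you correctly identify the central difficulty, namely retaining the exact constant in front of $\log(1/\sigma(a)^2)$. But you explicitly leave that difficulty unresolved, calling the balancing of the coarse-scale and fine-scale mechanisms ``the technical heart of the argument'' without carrying it out.

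More importantly, the resolution you do sketch for the coarse scales would fail. You propose to handle them ``by applying the one-sided bound \textit{(i)} directly at the sharp threshold $\sqrt{2k\log 2}$'' to the anchor points of the chaining. The only cardinality control available for those anchors is condition \textit{(iii)}, which bounds the $(\delta u)^{1/2}$-net of $\{a:\sigma(a)^2\le\delta\}$ by $Au^{-B}\delta^{-V}$. With $\delta=2^{-k}$ and $\eta=\sqrt{2k\log 2}+o(\sqrt{k})$, condition \textit{(i)} gives an individual tail of order $2^{-k}$ up to factors polynomial in $k$, so the union bound over the net is of order $2^{k(V-1)}$ --- divergent, not summable, whenever $V>1$. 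This is exactly the regime in which the present paper invokes the theorem: Section \ref{u3} establishes $V=(3d-1)/2$, which exceeds $1$ for every $d\ge 2$. Hence the sharp threshold cannot be obtained from a union bound over the net furnished by \textit{(iii)} plus truncated chaining; the published proof needs a genuinely more delicate use of the interplay between the variance-monotonicity relation $\sigma(b)^2\le\sigma(a)^2+\rho(a,b)^2$, the increment bound \textit{(ii)}, and the capacity bound \textit{(iii)} to reduce the effective multiplicity at the sharp threshold to order $\delta^{-1}$ rather than $\delta^{-V}$. Two further, smaller issues: your displayed bound for $\bb{P}(E_k)$ omits the net multiplicity in its first term, and a ``Borell--TIS-type'' concentration inequality is not available here, since condition \textit{(ii)} gives only sub-Gaussian increments and the process need not be Gaussian; the tail of the chained supremum has to be derived directly from the chaining tree, which your sketch also does not do.
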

For the application of Theorem \ref{k5} we introduce the pseudometric space $(\ca{T},\rho)$, 
where $\ca{T}=S^{d-1}\times[{0},{1}]^d\times(0,1]$ and 
\[
 \rho((s^1,t^1,h_1),(s^2,t^2,h_2))=\big(\|s^1-s^2\|_1^2+\|t^1-t^2\|+|h_1^d-h_2^d|\big)^{1/2}
\]
for $(s^1,t^1,h_1),(s^2,t^2,h_2)\in\ca{T}$. Moreover,  for $(s,t,h)\in\ca{T}$ define $\sigma(s,t,h)=h^{d/2}$,
\[
 X(s,t,h)=\sigma(s,t,h)\frac{h^{{d}/{2}+r+1}}{V_{s,t,h}}\int_{\bb{R}^d}F_{s,t,h}(x)\tn{d}B_x=\frac{h^{d+r+1}}{V_{s,t,h}}\int_{\bb{R}^d}F_{s,t,h}(x)\tn{d}B_x.
\]
In the following, we prove that the process $X$ fulfills the conditions of Theorem \ref{k5}.
\\

 \textit{(i)}: We have by definition of $\sigma$ and $\rho$ that
 \[
 \sigma(b)^2\leq\sigma(a)^2+\rho(a,b)^2\quad\text{for all }a,b\in\ca{T}.
\]
Furthermore, it holds
\[
\bb{P}\big(X(s,t,h)>\sigma(h)\eta\big)\leq\exp(-\eta^2/2)
\]
as $X(s,t,h)/\sigma(h)$ corresponds in distribution to a normal distributed random variable with mean zero and variance
 one by definition of $V_{s,t,h}$.\\

\textit{(ii)}: By definition, $X(s^1,t^1,h_1)-X(s^2,t^2,h_2)$ corresponds in distribution to a normal distributed
random variable with mean zero and variance
\[
\Big\|\frac{h_1^{d+r+1}}{V_{s^1,t^1,h_1}}F_{s^1,t^1,h_1}-\frac{h_2^{d+r+1}}{V_{s^2,t^2,h_2}}F_{s^2,t^2,h_2}\Big\|_{L^2(\bb{R}^d)}^2.
\]
W.l.o.g. we assume in the following $h_1\leq h_2$ and note that condition
\textit{(ii)}  (with $L=2$) follows from the inequality
\begin{align} \begin{split}\label{h6}
\Big\|\tfrac{h_1^{d+r+1}}{V_{s^1,t^1,h_1}}F_{s^1,t^1,h_1}-\tfrac{h_2^{d+r+1}}{V_{s^2,t^2,h_2}}F_{s^2,t^2,h_2}\Big\|_{L^2(\bb{R}^d)}
&\lesssim \big\|{h_1^{d+r+1}}{}F_{s^1,t^1,h_1}-{h_2^{d+r+1}}{}F_{s^2,t^2,h_2}\big\|_{L^2(\bb{R}^d)}\\&\quad\quad+h_1^{d/2}
|V_{s^1,t^1,h_1}-V_{s^2,t^2,h_2}|\\ 
&\lesssim \rho((s^1,t^1,h_1),(s^2,t^2,h_2))\end{split}
\end{align}
for $(s^1,t^1,h_1),(s^2,t^2,h_2)\in S^{d-1}\times[{0},{1}]^d\times(0,1]$. In the first inequality we used the fact that $V_{s^1,t^1,h_1}$ is uniformly bounded from below and
$\big\|h_1^{d+r+1}F_{s^1,t^1,h_1}\big\|_{L^2(\bb{R}^d)}\lesssim h_1^{d/2}$ as shown in
Lemma \ref{a11} \textit{(i)}.

In a proof of the second inequality in \eqref{h6} we note that by application of the triangle inequality 
\bal
 h_1^{d/2}|V_{s^1,t^1,h_1}-V_{s^2,t^2,h_2}|&=h_1^{d/2}\big|\|h_1^{d/2+r+1}F_{s^1,t^1,h_1}\|_{L^2(\bb{R}^d)}-\|h_2^{d/2+r+1}F_{s^2,t^2,h_2}\|_{L^2(\bb{R}^d)}\big|\\
 &\leq h_1^{d/2}\|h_1^{d/2+r+1}F_{s^1,t^1,h_1}-h_2^{d/2+r+1}F_{s^2,t^2,h_2}\|_{L^2(\bb{R}^d)}\\
 &\leq h_1^{d+r+1}\|F_{s^1,t^1,h_1}-F_{s^2,t^2,h_2}\|_{L^2(\bb{R}^d)}+\|F_{s^2,t^2,h_2}\|_{L^2(\bb{R}^d)}\big|h_1^{d+r+1}-h_1^{d/2}h_2^{d/2+r+1}\big|.
\end{align*}
In Lemma \ref{a11} \textit{(i)} we have proven $\|F_{s^2,t^2,h_2}\|_{L^2(\bb{R}^d)}\lesssim h_2^{-{d}/{2}-r-1}$, which implies 
\begin{align}
 \label{h7}
 h_1^{d/2}|V_{s^1,t^1,h_1}-V_{s^2,t^2,h_2}|&
 \lesssim h_1^{d+r+1}\|F_{s^1,t^1,h_1}-F_{s^2,t^2,h_2}\|_{L^2(\bb{R}^d)}+\big|\tfrac{h_1^{d+r+1}}{h_2^{d/2+r+1}}-h_1^{d/2}\big|\\
 &\lesssim h_1^{d+r+1}\|F_{s^1,t^1,h_1}-F_{s^2,t^2,h_2}\|_{L^2(\bb{R}^d)}+|h_1^{d/2}-h_2^{d/2}|.\nonumber
\end{align}
Moreover, we find by another application of the  inequality  $\|F_{s^2,t^2,h_2}\|_{L^2(\bb{R}^d)}\lesssim h_2^{-{d}/{2}-r-1}$
\begin{align}
\|h_1^{d+r+1}F_{s^1,t^1,h_1}-h_2^{d+r+1}F_{s^2,t^2,h_2}\|_{L^2(\bb{R}^d)}&\leq h_1^{d+r+1}\|F_{s^1,t^1,h_1}-F_{s^2,t^2,h_2}\|_{L^2(\bb{R}^d)}
\nonumber\\&\quad  ~~~~~~~+\|F_{s^2,t^2,h_2}\|_{L^2(\bb{R}^d)} \label{h88}
|h_1^{d+r+1}-h_2^{d+r+1}|   \\
\lesssim&\;
 h_1^{d+r+1}\|F_{s^1,t^1,h_1}-F_{s^2,t^2,h_2}\|_{L^2(\bb{R}^d)}+\big|\tfrac{h_1^{d+r+1}}{h_2^{{d}/{2}+r+1}}-h_2^{{d}/{2}}\big|
 \nonumber\\
 \lesssim&\;
  h_1^{d+r+1}\|F_{s^1,t^1,h_1}-F_{s^2,t^2,h_2}\|_{L^2(\bb{R}^d)}+|h_1^{d/2}-h_2^{{d/2}}|\nonumber. 
\end{align}
Hence, observing \eqref{h7} and \eqref{h88} the inequality \eqref{h6} follows from
\begin{equation} \label{h9}
 h_1^{d+r+1}\|F_{s^1,t^1,h_1}-F_{s^2,t^2,h_2}\|_{L^2(\bb{R}^d)}+|h_1^{d/2}-h_2^{{d/2}}|\lesssim \rho((s^1,t^1,h_1),(s^2,t^2,h_2)).
\end{equation}
For a proof of this inequality   we use Plancherel's theorem which yields
\bal
\|F_{s^1,t^1,h_1}-F_{s^2,t^2,h_2}\|_{L^2(\bb{R}^d)}^2\lesssim \int_{\bb{R}^d}\big(1+\|y\|^2)^{r}\Big|\ca{F}\Big(h_1^{-d}\partial_{s^1}\phi\big(
\tfrac{.-t^1}{h_1}\big)-h_2^{-d}\partial_{s^2}\phi\big(\tfrac{.-t^2}{h_2}\big)\Big)(y)\Big|^2\tn{d}y.
\end{align*}
The integrand on the right hand side can be estimated as follows
\bal \Big|\ca{F}\Big(h_1^{-d}\partial_{s^1}\phi\big(
\tfrac{.-t^1}{h_1}\big)-h_2^{-d}\partial_{s^2}\phi\big(\tfrac{.-t^2}{h_2}\big)\Big)(y)\Big|^2
\lesssim&\; \Big|\ca{F}\Big(h_1^{-d}\partial_{s^1}\phi\big(
\tfrac{.-t^1}{h_1}\big)-h_1^{-d}\partial_{s^2}\phi\big(\tfrac{.-t^1}{h_1}\big)\Big)(y)\Big|^2\\&\;+\Big|\ca{F}\Big(
h_1^{-d}\partial_{s^2}\phi\big(
\tfrac{.-t^1}{h_1}\big)-h_2^{-d}\partial_{s^2}\phi\big(\tfrac{.-t^2}{h_2}\big)\Big)(y)\Big|^2,\end{align*}
and we obtain
\bal &\|F_{s^1,t^1,h_1}-F_{s^2,t^2,h_2}\|_{L^2(\bb{R}^d)}^2\\\lesssim& \int_{\bb{R}^d}\big(1+\|y\|^2\big)^r\Big|
\sum_{k=1}^d\Big\{s^1_{k}\ca{F}\Big(h_1^{-d}\partial_{e^k}\phi\big(\tfrac{.-t^1}{h_1}\big)\Big)(y)
-s_k^2\ca{F}\Big(h_1^{-d}\partial_{e^k}\phi\big(\tfrac{.-t_1}{h_1}\big)\Big)(y)\Big\}\Big|^2\tn{d}y\\
&+\int_{\bb{R}^d}\big(1+\|y\|^2\big)^r\Big|\ca{F}\Big(h_1^{-d}\partial_{s^2}\phi\big(\tfrac{.-t^1}{h_1}\big)-h_2^{-d}\partial_{s^2}\phi
\big(\tfrac{.-t^2}{h_2}\big)\Big)(y)\Big|^2\tn{d}y,\end{align*}
where $e^k$   denotes the $k$th unit vector of $\bb{R}^d$ $(k=1,\hdots,d)$. By a substitution it follows that
\[\Big|\ca{F}\Big(h_1^{-d}\partial_{e^k}\phi\big(\tfrac{.-t^1}{h_1}\big)\Big)(y)\Big|=h_1^{-1}\big|\ca{F}(\partial_{e^k}\phi)(h_1y)\big|,\]
which gives
\begin{align}\begin{split}\label{a1}&\|F_{s^1,t^1,h_1}-F_{s^2,t^2,h_2}\|_{L^2(\bb{R}^d)}^2\\\lesssim& h_1^{-d-2r-2}\|s^1-s^2
\|_1^2\int_{\bb{R}^d}\big(1+\|y\|^2\big)^{r}\big|\ca{F}(\partial_{e^k}\phi)(y)\big|^2\tn{d}y\\
&+\ird \big(1+\|y\|^2\big)^r\Big|\ca{F}\Big(h_1^{-d}\partial_{s^2}\phi\big(\tfrac{.-t^1}{h_1}\big)\Big)(y)-
\ca{F}\Big(h_1^{-d}\partial_{s^2}
\phi\big(\tfrac{.-t^2}{h_1}\big)\Big)(y)\Big|^2\tn{d}y\\
&+\ird \big(1+\|y\|^2\big)^r\Big|\ca{F}\Big(h_1^{-d}\partial_{s^2}\phi\big(\tfrac{.-t^2}{h_1}\big)-h_2^{-d}\partial_{s^2}\phi\big(\tfrac{.-t^2}{h^2}
\big)\Big)(y)\Big|^2\tn{d}y.\end{split}\end{align}
Here, we used another substitution and the triangle inequality. For an upper bound for the first term on the right hand side of \eqref{a1}, note that
by Assumption \ref{a8} $\int_{\bb{R}^d}(1+\|y\|^2)^{r}|\ca{F}(\partial_{e^k}\phi)(y)|^2\tn{d}y$ is finite. Furthermore, a substitution within the
Fourier transform shows that the  second term of the  right hand side of \eqref{a1} is not greater than
\[\ird \big(1+\|y\|^2\big)^r\big|e^{-iy.t^1}-e^{-iy.t^2}\big|^2\Big|\ca{F}\Big(h_1^{-d}\partial_{s^2}
\phi\big(\tfrac{.}{h_1}\big)\Big)(y)\Big|^2\tn{d}y.\]
By an application of Euler's formula, $\cos(x)\geq 1-x$ for all $x\geq 0$ and  Cauchy-Schwartz's inequality, we find
\[
 \big|e^{-iy.t^1}-e^{-iy.t^2}\big|^2=\big|1-e^{-iy.(t^1-t^2)}\big|^2\lesssim\big(1+\|y\|^2)^{1/2}\|t^1-t^2\|.
\]
Therefore, two substitutions and Assumption \ref{a8} show that the second term on the right hand side of \eqref{a1} is bounded from
above  (up to some constant) by
\[\|t^1-t^2\|\ird \big(1+\|y\|^2\big)^{r+{1/2}}\Big|\ca{F}\Big(h_1^{-d}\partial_{s^2}
\phi\big(\tfrac{.}{h_1}\big)\Big)(y)\Big|^2\tn{d}y\lesssim h_1^{-d-2r-3}\|t^1-t^2\|.\]
It remains to consider the third term on the right hand side of \eqref{a1}. Plancherel's theorem, the rule for the Fourier transform
of a derivative and a substitution show that the third term on the right hand side of \eqref{a1} can be bounded by
\begin{align}\label{a2}
 &\sum_{|\bs{\alpha}|\leq \lceil r+1\rceil}\Big\|\partial^{\bs{\alpha}}\Big(h_1^{-d}\phi\big(\tfrac{.}{h_1}\big)-h_2^{-d}\phi\big(\tfrac{.}{h_2}\big)\Big)\Big\|^2_{L^2(\bb{R}^d)}\\
\nonumber\lesssim & \sum_{|\bs{\alpha}|\leq \lceil r+1\rceil}\Big\{\tfrac{1}{h_1^{2d+2|\bs{\alpha}|}}\big\|(\partial^{\bs{\alpha}}\phi)\big(\tfrac{.}{h_1}\big)
-(\partial^{\bs{\alpha}}\phi)\big(\tfrac{.}{h_2}\big)\big\|^2_{L^2(\bb{R}^d)}+\big\|(\partial^{\bs{\alpha}}
\phi)\big(\tfrac{.}{h_2}\big)\big\|_{L^2(\bb{R}^d)}^2\big|\tfrac{1}{h_1^{2d+2|\bs{\alpha}|}}-\tfrac{1}{h_2^{2d+2|\bs{\alpha}|}}\big|\Big\},
\end{align}
where we have used Assumption \ref{a8}. From the estimate $\|(\partial^{\bs{\alpha}}
\phi)(\tfrac{.}{h_2})\|_{L^2(\bb{R}^d)}^2\lesssim h_2^d$ we obtain that the second term on the right hand
side of \eqref{a2} is bounded from above (up to some constant) by
\[ h_2^d \big|\tfrac{1}{h_1^{2d+2|\bs{\alpha}|}}-\tfrac{1}{h_2^{2d+2|\bs{\alpha}|}}\big|\lesssim h_1^{-2d-2r-2}\big|h_1^d-h_2^d\big|
\]
for all $|\bs{\alpha}|\leq \lceil r+1\rceil$.  The first  term on the right hand
side of \eqref{a2} can be bounded by   Lemma \ref{a3} using Assumption \ref{a8}, that is
\[\tfrac{1}{h_1^{2d+2|\bs{\alpha}|}}\big\|(\partial^{\bs{\alpha}}\phi)
\big(\tfrac{.}{h_1}\big)-(\partial^{\bs{\alpha}}\phi)\big(\tfrac{.}{h_2}\big)\big\|^2_{L^2(\bb{R}^d)}\lesssim h_1^{-2d-2r-2}\big|h_1^d-h_2^d\big|
\]
for all $|\bs{\alpha}|\leq \lceil r+1\rceil $, which proves that the right hand side of \eqref{a2} is not greater (up to some constant) than
$h_1^{-2d-2r-2}\big|h_1^d-h_2^d\big|$.

Hence,
\[
 \|F_{s^1,t^1,h_1}-F_{s^2,t^2,h_2}\|_{L^2(\bb{R}^d)}^2\lesssim h_1^{-d-2r-2}\|s^1-s^2\|^2_1+h_1^{-d-2r-3}\|t^1-t^2\|+h_1^{-2d-2r-2}\big|h_1^d-h_2^d\big|
\]
 proves \eqref{h9} and concludes the proof of \textit{(ii)}.\\

\textit{(iii)}: Let $\td{N}(\ve,\ca{T}')\equiv\td{N}(\ve,\ca{T}',\rho)$ denote the covering number of the set $\ca{T}'\subseteq
\ca{T}$ and note that covering  and packing numbers are equivalent in the sense that
\[
 {N}(2\ve,\ca{T}')\leq\td{N}(\ve,\ca{T}')\leq{N}(\ve,\ca{T}').
\]
Hence, it suffices to find an upper bound for the cardinality of a well-chosen covering subset
$\ca{T}'\subset S^{d-1}\times[{0},{1}]^d\times\{h\in(0,1]:\;h^d\leq\delta \}$ that fulfills the following condition:

For any
$(s^1,t^1,h_1)\in S^{d-1}\times[{0},{1}]^d\times\{h\in(0,1]:\;h^d\leq\delta \}$ there exists $(s^2,t^2,h_2)\in\ca{T}'$
with $\rho^2((s^1,t^1,h_1),(s^2,t^2,h_2))\leq\delta u$. It is easy to see that such a set is given by
\beq{a4}
 \ca{T}'=\ca{T}_1'\times\ca{T}_2'\times\ca{T}_3',
\eeq
where $\ca{T}_1'$ is a covering subset of  $S^{d-1}$ with respect to $\sqrt\ve=\frac{(\delta u)^{1/2}}{\sqrt{3}}$ and
$\ca{T}_2',$ $\ca{T}_3'$ are  covering subsets of $[{0},$ ${1}]^d,$ $\{h\in(0,1]:\;h^d\leq\delta \}$, respectively,
with respect to $\ve=\frac{\delta u}{3}$. Here, the metrics under consideration are
$(s^2,s^1)\mapsto\|s^2-s^1\|_1,$ $(t^2,t^1)\mapsto\|t^2-t^1\|$ and $(h_2,h_1)\mapsto|h_2^d-h_1^d|$.

Again, we make use of the equivalence of packing and covering numbers and determine in the following upper bounds for the
packing numbers of  $S^{d-1}$ and $[{0}, {1}]^d$.

 We begin with the determination of an upper bound for the packing number  $N({\sqrt{\ve}},S^{d-1})$
 w.r.t. $\|\,.\,\|_1$ for $\ve>0$. Note
that by the equivalence of all norms in $\bb{R}^d$, the packing number  $N(\sqrt{\ve},S^{d-1})$ w.r.t. $\|\,.\,\|$ is of the
same order in $\ve$. We will therefore consider the latter.

Let $\ca{T}_1'$ be any subset of $S^{d-1}$ such that $\|s^2-s^1\|>{\sqrt{\ve}}$ for all $s^2,s^1\in \ca{T}_1',
\;s^2\neq s^1$. By definition of $\ca{T}_1'$, the open balls $B_{\frac{\sqrt{\ve}}{2}}(s^2)$ and $B_{\frac{\sqrt{\ve}}{2}}(s^1)$ are disjoint for all
$s^2,s^1\in \ca{T}_1',\;s^2\neq s^1$. Furthermore,
every  ball $B_{\frac{\sqrt{\ve}}{2}}(s),\;s\in\ca{T}_1'$, is contained in the annulus around the zero point with  radii $1+\frac{\sqrt{\ve}}{2}$
and $1-\frac{\sqrt{\ve}}{2}$. Recall that the volume of this annulus is of the order $(1+\frac{\sqrt{\ve}}{2})^d-(1-\frac{\sqrt{\ve}}{2})^d$.

A simple volume argument gives
\[
 \#\ca{T}_1'\lesssim \sqrt{\ve}^{-d}\Big(\big(1+\tfrac{\sqrt{\ve}}{2}\big)^d-\big(1-\tfrac{\sqrt{\ve}}{2}\big)^d\Big)\lesssim \ve^{({-d+1})/{2}}.
\]
It is a well-known fact that the packing number of $[{0},{1}]^d$ w.r.t. $\|\,.\,\|$ fulfills $N(\ve,[{0},{1}]^d)\lesssim\ve^{-d}$. Hence,
it remains to consider the covering number
$\td{N}(\ve,(0,\delta^{1/d}] )$ w.r.t.  the metric $(h_2,h_1)\mapsto{|h_2^d-h_1^d|}$. Observe that the distance  between adjacent points in the set
$\ca{T}_3':=\big\{(j\ve)^{1/d},\;j=1,\hdots,\lfloor\frac{\delta }{\ve}\rfloor\big\}$ is equal to $\ve$.
As a consequence, $\td{N}(\ve,(0,\delta^{1/d}] )\lesssim\frac{\delta }{\ve}$.

From \eqref{a4}  and the results presented above we deduce
\[
 N\big((\delta u)^\frac{1}{2},\{a\in\ca{T}:\;\sigma(a)^2\leq \delta\}\big)\lesssim u^{\frac{-3d-1}{2}}\delta^{\frac{-3d+1}{2}}.
\]

It remains to prove the continuity of the sample paths of $X$. For this purpose, we will make use of Theorem 1.3.5 in \cite{adler2007random}.

Define a further semimetric $\td d$  on $ {\ca{T}}$ by
\[
\td  d((s^1,t^1,h_1),(s^2,t^2,h_2))=\big(\bb{E}((X(s^1,t^1,h_1)-X(s^2,t^2,h_2))^2)\big)^{1/2}
\]
and the log-entropy $H(\ve)=\log(\td{N}(\ve, {\ca{T}},\td d))$. Then, Theorem 1.3.5 in \cite{adler2007random} states that $X$ has a.s. continuous sample
paths with respect to the semimetric $\td d$ if
\[
 \int_0^{\tn{diam}( {\ca{T}})/2}H^{1/2}(\ve)\tn{d}\ve<\infty,
\]
where $\tn{diam}( {\ca{T}})=\sup_{(s^1,t^1,h_1),(s^2,t^2,h_2)\in {\ca{T}}}\td d((s^1,t^1,h_1),(s^2,t^2,h_2))$. However, by the definition of $X$, we have that
\bal
\td  d((s^1,t^1,h_1),(s^2,t^2,h_2)) 
&=\|V_{s^1,t^1,h_1}^{-1}h_1^{d+r+1}F_{s^1,t^1,h_1}-V_{s^2,t^2,h_2}^{-1}h_2^{d+r+1}F_{s^2,t^2,h_2}\|_{L^2(\bb{R}^d)}\\
&\lesssim \rho((s^1,t^1,h_1),(s^2,t^2,h_2)),
\end{align*}
where the latter inequality has been proven in \textit{(ii)}. Hence, similar arguments as presented in \textit{(iii)} show that
$\td{N}(\ve, {\ca{T}},\td d)\lesssim\ve^{-a}$ for some $a>0$, which concludes the proof of the a.s. continuity of the sample paths of $X$
w.r.t. $\td d$ and implies the a.s. continuity of the sample paths of $X$ w.r.t. $\rho$.\\

\section{ Proofs of Theorems \ref{t10} and \ref{t14}}
\def\theequation{6.\arabic{equation}}
\setcounter{equation}{0}
\label{techres}

\begin{proof}[\textbf{Proof of Theorem \ref{t10}}]
Denote by $q$ the probability of at least one false rejection among
all tests  \eqref{t8} and \eqref{t81}. Using Theorem \ref{a9}, we further deduce from \eqref{t11}
\begin{align*}
q &=1-\bb{P}\Big( n^{-1}|\sum_{i=1}^nF_j(Y_i)|\leq\kappa^j_n(\alpha)\tn{ for all }j=1,\hdots,p\Big)\\
&=1-\bb{P}\big( \td{X}_j^{(1)}\leq \kappa_n(\alpha)\tn{ for all }j=1,\hdots,p\big)\\
&=1-\bb{P}\big(\td{X}_j\leq \kappa_n(\alpha)\tn{ for all }j=1,\hdots,p\big)+o(1)
 ~\leq~\alpha+o(1)\end{align*}
 for $n\rightarrow\infty.$
\end{proof}

\begin{proof}[\textbf{Proof of Theorem \ref{t14}}]
We begin deriving a criterion for the simultaneous rejection of the hypotheses \eqref{t5neu1} on a given set of scales. To this end, let $0<(\alpha_n)_{n\in\bb{N}}< 1$ be an arbitrary null sequence and
$J\subseteq\{1,\hdots,p\}$ be the set of all indices where the inequality
\beq{t20}
 \bb{E}(F_j(Y_1))=-\int_{\bb{R}^d} \partial_{s^j}f(x)\phi_{t^j,h_j}(x)\tn{d}x>2\kappa^j_n(\alpha_n)
\eeq
is satisfied. An application of Theorem \ref{a9} shows that the probability of simultaneous rejection of the Null Hypotheses for all tests in \eqref{t81}
indexed by $J$
 (where $\alpha$ is replaced by $\alpha_n$) is asymptotically equal to one, i.e.
 \[
  \td{q}:=\bb{P}\Big(n^{-1}\sum_{i=1}^nF_j(Y_i)>\kappa^j_n(\alpha_n)\tn{ for all }j\in J\Big)\geq 1-\alpha_n+o(1)=1-o(1).
 \]
Indeed,
\bal
\td{q}&\geq 
\bb{P}\Big(n^{-1}\sum_{i=1}^nF_j(Y_i)-\bb{E}(F_j(Y_1))\geq-\kappa^j_n(\alpha_n)\tn{ for all }j\in J\Big)\\
&\geq\bb{P}\Big(\Big |n^{-1}\sum_{i=1}^nF_j(Y_i)-\bb{E}(F_j(Y_1))\Big |\leq\kappa^j_n(\alpha_n)\tn{ for all }j\in J\Big)\\
&\geq 1-\alpha_n+o(1)
\end{align*}
by similar arguments as presented in the proof of Theorem \ref{t10}.

Now let $x^0\in(0,1)^d$ be a mode of $f$ and $(s,t,h)\in\ca{T}_n^{x^0}$, i.e.
$ch\geq\|x^0-t\|\geq2\sqrt{d} h$ for some $c>2\sqrt{d}$ and $\tn{angle}(x^0-t,s)\rightarrow0$ for $n\rightarrow\infty$. Following
the line of arguments presented in the proof of Theorem 3.3 in \cite{Eckle}, one can prove that, under the given assumptions,
$\partial_{s}f(x)\lesssim
-h$
for all $x\in \tn{supp}\phi_{t,h}$. Hence,
\[
 -\int_{\mathbb{R}^d} \phi_{t,h}(x)\partial_{s}f(x)\tn{d}x\gtrsim h.
\]
As $\kappa_n(\alpha_n)$ is uniformly bounded by Theorem \ref{a9}, we find that
\[
 \frac{h^{-d/2-r-1}}{\sqrt{n}}\Big(\frac{\log\log(e^eh^{-d})}{\sqrt{\log(eh^{-d}})}\kappa_n(\alpha_n)
+\sqrt{2\log(h^{-d})}\Big)\lesssim\frac{h^{-d/2-r-1}}{\sqrt{n}}\sqrt{\log(h^{-d})}.
\]
For a proof of \eqref{t20} it remains to find a condition on $h$ such that
\[
 h^{d/2+r+2}\gtrsim\frac{1}{\sqrt{n}}\sqrt{\log(h^{-d})},
\]
which holds for $h\geq C\log(n)^{1/(d+2r+4)}n^{-1/(d+2r+4)}$ for some $C>0$ sufficiently large.
\end{proof}

\section{Two technical results}\label{sec7}
\def\theequation{7.\arabic{equation}}
\setcounter{equation}{0}
\begin{lemma}\label{a3}
 Let $\Phi:\bb{R}^d\rightarrow\bb{R}$ be continuously differentiable with compact support. Then,
 \[
  \big\|\Phi\big(\tfrac{.}{h_1}\big)-\Phi\big(\tfrac{.}{h_2}\big)\big\|^2_{L^2(\bb{R}^d)}\lesssim\big|h_1^d-h_2^d\big|
 \]
 for all $h_1,h_2\in(0,1].$
\end{lemma}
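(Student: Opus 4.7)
The plan is to reduce by scaling to a statement about the difference $\Phi(y)-\Phi(\mu y)$ for a single parameter $\mu\in(0,1]$, and then split the domain of integration into the region where both copies are supported and the annular region where only one is. Without loss of generality assume $h_1\le h_2$ and set $\mu=h_1/h_2\in(0,1]$. Substituting $y=x/h_1$ gives
\[
\big\|\Phi(\tfrac{\cdot}{h_1})-\Phi(\tfrac{\cdot}{h_2})\big\|_{L^2(\bb R^d)}^2=h_1^d\int_{\bb R^d}|\Phi(y)-\Phi(\mu y)|^2\tn dy,
\]
so it suffices to show that the right-hand side is $\lesssim h_2^d-h_1^d$. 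Fix $R>0$ with $\tn{supp}\,\Phi\subseteq B_R(0)$, so $\Phi(\mu y)$ vanishes outside $B_{R/\mu}(0)$ and $\Phi(y)$ vanishes outside $B_R(0)$.

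On the inner region $\{\|y\|\le R\}$, both $y$ and $\mu y$ lie in $B_R(0)$, where $\Phi$ is $C^1$ with bounded gradient. The mean value inequality applied to $t\mapsto\Phi(ty)$ on $[\mu,1]$ then gives
\[
|\Phi(y)-\Phi(\mu y)|\le(1-\mu)\|y\|\,\|\nabla\Phi\|_\infty\le(1-\mu)R\,\|\nabla\Phi\|_\infty,
\]
so the contribution to $h_1^d\int|\Phi(y)-\Phi(\mu y)|^2\tn dy$ is bounded (up to a constant depending only on $d,R,\|\nabla\Phi\|_\infty$) by $h_1^d(1-\mu)^2\le h_1^d(1-\mu)=h_1^{d-1}\,h_1(h_2-h_1)/h_2\le h_1^{d-1}(h_2-h_1)$. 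On the outer region $\{R<\|y\|\le R/\mu\}$ one has $\Phi(y)=0$, so the integrand is $|\Phi(\mu y)|^2\le\|\Phi\|_\infty^2$, and the contribution is bounded by
\[
h_1^d\,\|\Phi\|_\infty^2\cdot c_d\bigl((R/\mu)^d-R^d\bigr)=c_d\,\|\Phi\|_\infty^2 R^d\,h_1^d(\mu^{-d}-1)=c_d\,\|\Phi\|_\infty^2 R^d\,(h_2^d-h_1^d),
\]
using $h_1^d\mu^{-d}=h_2^d$.

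To close the estimate, combine the two pieces. For the inner piece it remains to observe
\[
h_1^{d-1}(h_2-h_1)\le(h_2-h_1)(h_2^{d-1}+h_2^{d-2}h_1+\cdots+h_1^{d-1})=h_2^d-h_1^d,
\]
so both contributions are $\lesssim h_2^d-h_1^d=|h_1^d-h_2^d|$, which is the required inequality.

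The main obstacle I anticipate is matching the correct scale on the inner region: a naive bound based only on $(1-\mu)^2$ would produce a term like $h_1^d(h_2-h_1)^2/h_2^2$, which is not obviously controlled by $h_2^d-h_1^d$ when $d=1$. The trick is to use $(1-\mu)^2\le 1-\mu$ together with the algebraic identity $h_2^d-h_1^d=(h_2-h_1)\sum_{k=0}^{d-1}h_1^k h_2^{d-1-k}\ge h_1^{d-1}(h_2-h_1)$, which lets the inner contribution be absorbed; everything else is just scaling and a compact-support argument.
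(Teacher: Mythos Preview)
Your proof is correct. The paper organizes the argument differently: rather than substituting first and splitting the domain geometrically, it expands the square as
\[
h_1^d\|\Phi\|_{L^2}^2+h_2^d\|\Phi\|_{L^2}^2-2h_1^d\int_{\bb R^d}\Phi(x)\Phi\big(\tfrac{h_1}{h_2}x\big)\tn dx,
\]
applies the mean-value theorem to $\Phi(\tfrac{h_1}{h_2}x)-\Phi(x)$ in the cross term, and arrives at $(h_2^d-h_1^d)\|\Phi\|_{L^2}^2+O\big(h_1^d(1-h_1/h_2)\big)$. Both arguments rest on the same two ingredients---the mean-value bound $|\Phi(y)-\Phi(\mu y)|\lesssim(1-\mu)\|y\|$ and the algebraic inequality $h_1^{d-1}(h_2-h_1)\le h_2^d-h_1^d$---and produce the same two contributions. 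Your geometric decomposition into the overlap region $\{\|y\|\le R\}$ and the annulus $\{R<\|y\|\le R/\mu\}$ simply makes explicit what the bilinear expansion packages implicitly; it is arguably the more transparent of the two, while the paper's version avoids the domain split by letting the compact support of $\Phi$ truncate the integrals automatically.
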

\begin{proof}[\textbf{Proof of Lemma \ref{a3}}]
 W.l.o.g. we assume in the following that $h_1\leq h_2$ and obtain
 \begin{align}\begin{split}\label{9.11}
& \ird \Big(\Phi\big(\tfrac{x}{h_1}\big)-\Phi\big(\tfrac{x}{h_2}\big)\Big)^2\tn{d}x\\=&\ird \Phi^2\big(\tfrac{x}{h_1}\big)\tn{d}x
 +\ird \Phi^2\big(\tfrac{x}{h_2}\big)\tn{d}x -2\ird \Phi\big(\tfrac{x}{h_1}\big)\Phi\big(\tfrac{x}{h_2}\big)\tn{d}x\\
 =&h_1^d\ird \Phi^2(x)\tn{d}x+h_2^d\ird \Phi^2(x)\tn{d}x-2h_1^d\ird \Phi(x)\Phi\big(\tfrac{h_1}{h_2}x\big)\tn{d}x.
 \end{split}\end{align}
 Observe that
 \[
  \Phi\big(\tfrac{h_1}{h_2}x\big)=\Phi(x)+\big(-1+\tfrac{h_1}{h_2}\big)x.\nabla \Phi(\xi)
 \]
for some $\xi$ on the line that connects $x$ and $\tfrac{h_1}{h_2} x$. Hence, the  term in \eqref{9.11} is bounded  by
\bal
 &\big(h_2^d-h_1^d\big)\ird \Phi^2(x)\tn{d}x+2h_1^d\big|1-\tfrac{h_1}{h_2}\big|\sup_{y\in\tn{supp}\Phi}\|\nabla \Phi(y)\|\ird |\Phi(x)|\|x\|\tn{d}x\\
\lesssim& \big(h_2^d-h_1^d\big)+h_1^d-\frac{h_1^{d+1}}{h_2}\lesssim h_2^d-h_1^d.
\end{align*}

\end{proof}
\begin{lemma}[Fa\`{a} di Brunos formula]\label{s5}
Let $k\in\bb{N}$ and assume that $h_1,h_2:\bb{R}\rightarrow\bb{R}$ are sufficiently smooth functions. Then,
\begin{equation}
\frac{d^k}{dx^k} h_1(h_2(x)) = \sum_{(m_1,...,m_k) \in \mathcal{M}_k}
\frac{k!}{m_1!...m_k!} h_1^{(m_1+...+m_k)}(h_2(x)) \prod_{j=1}^k \left( \frac{h_2^{(j)}(x)}{j!} \right)^{m_j}
\end{equation}
for every $x\in\bb{R}$, where $\mathcal{M}_k$ is the set of all $k$-tuples of non-negative integers satisfying $\sum_{j=1}^k j m_j = k$.

\end{lemma}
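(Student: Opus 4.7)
My plan is to prove Faà di Bruno's formula by induction on $k$, with the base case being the ordinary chain rule and the inductive step relying on a careful bookkeeping argument involving the index set $\mathcal{M}_k$.

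For the base case $k=1$, the set $\mathcal{M}_1$ consists of the single tuple $(m_1)=(1)$, so the right-hand side reduces to $h_1'(h_2(x)) h_2'(x)$, which is the standard chain rule. For the inductive step, assuming the formula holds for some $k \geq 1$, I would differentiate the right-hand side once more using the product rule. Each term of the form
\[
\frac{k!}{m_1!\cdots m_k!}\, h_1^{(m_1+\cdots+m_k)}(h_2(x)) \prod_{j=1}^{k}\Bigl(\tfrac{h_2^{(j)}(x)}{j!}\Bigr)^{m_j}
\]
yields, upon differentiation, contributions of two types: (a) a factor coming from differentiating $h_1^{(m_1+\cdots+m_k)}(h_2(x))$, which produces $h_1^{(m_1+\cdots+m_k+1)}(h_2(x)) h_2'(x)$ and thus effectively increases $m_1$ by one; and (b) for each index $j \in \{1,\ldots,k\}$ with $m_j \geq 1$, differentiating one of the $\big(h_2^{(j)}(x)/j!\big)^{m_j}$ factors, which decreases $m_j$ by one and increases $m_{j+1}$ by one. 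Both operations preserve the constraint $\sum j m_j = k+1$ after the shift, so the resulting terms can be reindexed in terms of tuples in $\mathcal{M}_{k+1}$.

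The main obstacle, as is typical in proving this formula, is to verify that the combinatorial multinomial coefficients reassemble correctly: for any given tuple $(m_1',\ldots,m_{k+1}') \in \mathcal{M}_{k+1}$, one must check that summing the contributions from the various ``source'' tuples $(m_1,\ldots,m_k) \in \mathcal{M}_k$ (one for each of the transitions described above) yields exactly $(k+1)!/(m_1'!\cdots m_{k+1}'!)$. Concretely, a tuple $(m_1',\ldots,m_{k+1}')$ with $m_{k+1}'=0$ can only come from type (a) or from type (b) with $j<k$, while one with $m_{k+1}' \geq 1$ can only come from type (b) with $j=k$; in each case the relevant binomial identity
\[
\frac{k!}{m_1!\cdots m_k!}\cdot (\text{appropriate factor}) = \frac{(k+1)!}{m_1'!\cdots m_{k+1}'!}
\]
must be checked directly. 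This is routine but slightly tedious.

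As an alternative route, if the bookkeeping becomes unwieldy, I would instead prove the formula analytically via Taylor expansion: write $h_2(x+t) - h_2(x) = \sum_{j \geq 1} \tfrac{h_2^{(j)}(x)}{j!} t^j$, substitute into the Taylor expansion of $h_1$ about $h_2(x)$, expand the resulting power in $t$ using the multinomial theorem, and read off the coefficient of $t^k/k!$, which gives precisely the claimed formula. This bypasses the inductive combinatorics entirely and reduces the proof to a formal identification of coefficients.
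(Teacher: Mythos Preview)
The paper does not actually prove this lemma; Fa\`a di Bruno's formula is merely stated as a classical technical result in Section~\ref{sec7} and used as a black box in the proof of Lemma~\ref{20}. Your proposal therefore goes beyond what the paper does: both routes you outline --- induction on $k$ with the combinatorial bookkeeping on $\mathcal{M}_k$, and the formal Taylor-expansion argument --- are standard and correct proofs of the formula, and either would be acceptable here.
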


\end{document}